\newtheorem{theorem}{Theorem}[section]
\newtheorem{definition}{Definition}
\newtheorem{corollary}{Corollary}
\newtheorem{lemma}{Lemma}
\newtheorem{example}{Example}
\title{Unit-sphere games}
\author{Pingzhong Tang\\
IIIS, Tsinghua University\\
Beijing, China\\
\url{kenshinping@gmail.com}\\
\and Hanrui Zhang\\ 
IIIS, Tsinghua University\\
Beijing, China\\
\url{segtree@gmail.com}}
\begin{document}
\maketitle

\begin{abstract}

This paper introduces a class of games, called unit-sphere games, where strategies are real vectors with unit 2-norms (or, on a unit-sphere). As a result, they can no longer be interpreted as probability distributions over actions, but rather be thought of as allocations of one unit of resource to actions and the multiplicative payoff effect on each action is proportional to square-root of the amount of resource allocated to that action. The new definition generates a number of interesting consequences. We first characterize sufficient and necessary conditions under which a two-player unit-sphere game has a Nash equilibrium. The characterization effectively reduces solving a unit-sphere game to finding all eigenvalues and eigenvectors of the product of individual payoff matrices. For any unit-sphere game with non-negative payoff matrices, there always exists a unique Nash equilibrium; furthermore, the unique equilibrium is efficiently reachable via Cournot adjustment. In addition, we show that any equilibrium in positive unit-sphere games corresponds to approximate equilibria in the corresponding normal-form games. Analogous but weaker results are extended to positive $n$-player unit-sphere games. 

\end{abstract}

\section{Introduction}

Consider the following two games.
\begin{example}
{\em Protecting Manhattan.} Two police stations try to protect Manhattan (could be visualized as a rectangle) from a two-dimensional terrorist attack. Station A is responsible for protecting all the {\em streets}, i.e., the horizontal paths across the rectangle; while station B is responsible for protecting all the {\em avenues}, i.e., the vertical paths. Each police station has one unit of police force, distributes optimally its force among its paths, and derives a positive utility $u_{ij}^A$ (resp. $u_{ij}^B$) from successfully protecting each {\em subway station $S_{ij}$}, namely, the intersections of street $i$ and avenue $j$. The probability of successfully protecting a subway station is $\sqrt{a_ib_j}$, where $a_i$ and $b_j$ are the amount of police force station A and $B$ allocates to street $i$ and avenue $j$ respectively.
\end{example}

\begin{example}
{\em Graphical ads for combinatorial queries.}
Consider a situation where a user submits a query (say, ``Yellow Stone national park'') to a travel website. The query triggers interests from two complementary advertisement agencies: one has a collection of hotel ads and the other airline ads.  The website allocates two regions on its homepage, each with one unit of area, to the two groups of ads respectively. Agency A, with a set of hotel ads, tries to fill in the first region with graphical ads of hotels, similar for agency B and airline ads for the second region. Each agency derives positive utility $u_{ij}^A$ (resp. $u_{ij}^B$) if the user successfully purchases a combination of (hotel $i$, airline $j$). Note that here agency A's utility may depend on $j$ since payment rule may involve both $i$ and $j$. The probability that the user purchases the combination is $\sqrt{a_ib_j}$, where $a_i$ and $b_j$ are the areas $A$ and $B$ allocate to hotel ad $i$ and airline ad $j$ respectively (so $\sqrt{a_i}$ is roughly the height or width of ad $i$).
\end{example}

At first glance, both games resemble a general version of Blotto game~\cite{Blotto2006}, thus nontrivial to solve. However, a close scrutiny reveals an interesting pattern: both games be modeled as normal-form games where players need to allocate one unit of resources to actions and the payoff effect on each action is proportional to the square-root of amount of resource allocated to that action. As we shall see, both examples are instances of {\em positive unit-sphere games}, which possess unique, learnable pure Nash equilibria.

\section{Unit-sphere games}
    
Most of the paper deals with {\em 2-player unit-sphere games}. In Section~\ref{sec:n-player}, this definition is extended to accommodate any number of players.
        \begin{definition}
            A two-player {\em unit-sphere game (USG)} is defined by two matrices $A\times B$, where
            \begin{itemize}
                            \item $A$ is an $m\times n$ payoff matrix for player 1,
                            \item $B$ is an $n\times m$ payoff matrix for player 2.
            \end{itemize}          
        \end{definition}
   
A unit-sphere strategy $x$ for player 1 is a column vector of real numbers such that $ x \in \mathbb{R}^m,\ \|x\|_2 = 1$, while a strategy $y$ for player 2 is a column vector of real numbers such that $ x \in \mathbb{R}^n,\ \|y\|_2 = 1$. Given a {\em strategy profile} $(x,y)$, the utility obtained by player 1 is $x^TAy$ while the utility of player 2 is $y^TBx$. Other game-theoretical notions, such as best response and Nash equilibrium, follow standard definitions.
 
Mathematically, the above definition is a 2-player normal-form game except for the definition of strategy, where the restriction of unit $L_1$-norm is now replaced by unit $L_2$-norm. In other words, each unit-sphere strategy is a point on a unit sphere, rather than a probability distribution. This also implies that both $x$ and $y$ can be negative on some coordinates, as long as they are on a unit sphere.

It is important to note that a USG can just be thought of as a standard normal-form game where each pure strategy corresponds to a unit-sphere strategy and there are infinite many such strategies. From this perspective, the characterization theorems (Theorems~\ref{thm:1},~\ref{thm:3.3},~\ref{thm:3.4}) are sufficient and necessary conditions for a large class of games to have (unique) pure Nash equilibria. 

In this paper, we do not consider randomized unit-sphere strategies, for the following reasons. First of all, a randomization over unit-sphere strategies is no longer a unit-sphere strategy, thus not well-defined under our new definition. Secondly, it is not hard to see that such a randomized strategy has a $L_2$-norm less than 1 and is always utility-dominated by some unit-sphere strategy. Last but not least, we are interested in comparing unit-sphere strategy (which is somewhat mixed) to standard mixed strategy, in terms of existence and computation efficiency of Nash equilibrium. Adding another level of mixture makes the comparison less interesting.

One can also view players in a USG as {\em risk averse agents} whose payoffs, when facing a lottery outcome, are not linear expectations of their utilities on deterministic outcomes in the lottery, but concave expectations (in our case, a square-root function). In general, games with concave utility agents possess a mixed Nash equilibrium and it is in general {\sc PPAD-hard} to compute such an equilibrium~\cite[Theorem 1]{Fiat10}. Our model and results does not follow from Fiat and Papadimitriou in that, firstly, we allow for negative strategies, i.e., $x$ and $y$ can have negative entries, thus the whole strategy set is not necessarily convex, precluding a Nash style proof; secondly, when restricting to non-negative strategies, under the additional assumption of positive payoff matrices, we are able to show that a {\em unique} Nash equilibrium exists and easy to compute. Readers are referred to~\cite{Fiat10} and the references therein for an introduction on non-linear expectations.

Finally, in our definition, adding a positive constant to each payoff matrix cell no longer yields an equivalent USG. Intuitively, when adding a large constant to a player's payoff function, the player has more incentive to distribute her resource evenly among pure strategies. So, it loses generality to restrictions on positive payoff matrices. On the other hand, USGs are scale-invariant in the sense that multiplying a constant to a player's payoff function yields an equivalent USG.

\section{Nash Equilibria in USGs}
\label{sec:NE}

In this section, we characterize sufficient and necessary conditions for Nash equilibrium (NE) to exist in USGs. In particular, equilibrium exists in all the USGs with positive payoff matrices. It is unique and efficiently computable, via a well-known learning process known as {\em Cournot adjustment}. 
    

\subsection{Structure of NE in USGs}
        
            Let us now consider NE in a USG $A\times B$. It is easy to see that the utilities of the two players are
            $$
                u_1 = x^T A y = \|Ay\|_2 \cos \alpha,
            $$
            $$
                u_2 = y^T B x = \|Bx\|_2 \cos \beta,
            $$
 respectively, where $\alpha$ denotes the angle between $x$ and $Ay$ and $\beta$ denotes the angle between $y$ and $Bx$.\footnote{When $Ay = 0$ (resp. $Bx = 0$), one may set $\alpha$ (resp. $\beta$) arbitrarily.} Since both $x$ and $y$ are on the unit-sphere, a strategy profile $(x, y)$ forms an NE if and only if
            $$
                x = \mathrm{arg} \max_{x'} x'^T A y \iff \alpha = 0 \iff \lambda x = A y,
            $$
            and
            $$
                y = \mathrm{arg} \max_{y'} y'^T B x \iff \beta = 0 \iff \mu y = B x,
            $$
            where $\lambda = \|A y\|_2$, $\mu = \|B x\|_2$.

            By this observation, we have a necessary condition of existence of NE for two-player USGs.
            \begin{lemma}
            \label{lemma:1}
                Let $A$ and $B$ be the matrices of a USG. If $AB$ and $BA$ do not share a nonnegative eigenvalue, the USG does not have an NE.
            \end{lemma}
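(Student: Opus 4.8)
The plan is to argue by contraposition: assume the USG $A\times B$ has an NE $(x,y)$ and show that $AB$ and $BA$ must then share a nonnegative eigenvalue. The starting point is the first-order characterization of equilibrium already derived above. Namely, $(x,y)$ is an NE precisely when
$Ay = \lambda x$ and $Bx = \mu y$, where $\lambda = \|Ay\|_2 \ge 0$ and $\mu = \|Bx\|_2 \ge 0$. I would simply take these two vector identities as given and manipulate them algebraically.

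From here the computation is short. Multiplying the second identity on the left by $A$ and substituting the first gives $ABx = A(\mu y) = \mu\, Ay = \mu\lambda\, x$; symmetrically, multiplying the first identity on the left by $B$ and substituting the second gives $BAy = \lambda\mu\, y$. Since $x$ and $y$ lie on the unit sphere, they are nonzero, so $x$ is an eigenvector of $AB$ and $y$ is an eigenvector of $BA$, both associated with the common value $\lambda\mu$. Because $\lambda$ and $\mu$ are $2$-norms, $\lambda\mu \ge 0$, so this shared eigenvalue is a (real) nonnegative number. Taking the contrapositive: if $AB$ and $BA$ have no common nonnegative eigenvalue, then no profile $(x,y)$ can satisfy the two identities, hence the USG has no NE.

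I do not anticipate a genuine obstacle here, since the lemma is essentially an immediate corollary of the NE characterization; the only point meriting a remark is the degenerate case $\lambda\mu = 0$ (e.g.\ $Ay = 0$). Even then the argument goes through unchanged: we get $ABx = 0$ and $BAy = 0$, so $0$ is a common eigenvalue of $AB$ and $BA$ — which is also consistent with the standard fact that $AB$ and $BA$ always share their nonzero spectrum. So the write-up amounts to the two-line substitution above, together with the observations that unit-sphere strategies are nonzero and that $\lambda,\mu$, being norms, are nonnegative reals.
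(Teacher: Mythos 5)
Your argument is correct and is essentially identical to the paper's own proof: both take the equilibrium characterization $Ay=\lambda x$, $Bx=\mu y$ with $\lambda,\mu\ge 0$ and substitute to obtain $ABx=\lambda\mu x$ and $BAy=\lambda\mu y$, exhibiting $\lambda\mu$ as a shared nonnegative eigenvalue. Your added remarks on $x,y\ne 0$ and the degenerate case $\lambda\mu=0$ are fine but not a departure from the paper's route.
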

            \begin{proof}
                We show that an NE exists only if $AB$ and $BA$ share a nonnegative eigenvalue. Consider payoff matrices $A$ and $B$. For a NE profile $(x, y)$,
                $$
                    B \lambda x = B A y \Rightarrow \lambda \mu y = B A y,
                $$
                $$
                    A \lambda y = A B x \Rightarrow \lambda \mu x = A B x.
                $$
                In other words, $x$ is an eigenvector of $AB$ with eigenvalue $\lambda \mu$, and $y$ is an eigenvector of $BA$ with eigenvalue $\lambda \mu$.
            \end{proof}
            
            Since $AB$ and $BA$ have the same set of eigenvalues, the following theorem characterizes the sufficient and necessary condition for an NE to exist in any two-player USG.

            \begin{theorem}
            \label{thm:1}
                Let $A$ and $B$ be the matrices of an USG. There exists an NE for the USG if and only if $AB$ (or $BA$) has a nonnegative eigenvalue $\lambda \ge 0$.
            \end{theorem}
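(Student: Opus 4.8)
The plan is to get the ``only if'' direction for free from Lemma~\ref{lemma:1}: an NE forces $AB$ and $BA$ to share a nonnegative eigenvalue, and since $AB$ and $BA$ have the same nonzero eigenvalues this in particular makes $\lambda\ge 0$ an eigenvalue of $AB$. So the real work is the ``if'' direction, which I would prove constructively, turning a nonnegative eigenvalue of $AB$ into an explicit NE. The tool throughout is the characterization already derived in the text: $(x,y)$ is an NE exactly when $Ay=\lambda_1 x$ and $Bx=\mu_1 y$ with $\lambda_1=\|Ay\|_2\ge 0$ and $\mu_1=\|Bx\|_2\ge 0$, i.e.\ $x$ is aligned with $Ay$ and $y$ with $Bx$ (where every unit vector counts as ``aligned with'' $0$, since a player facing $Ay=0$ or $Bx=0$ is indifferent).

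The main case is $\lambda>0$. I would take a unit eigenvector $x$ with $ABx=\lambda x$; since $\lambda\ne 0$ we get $Bx\ne 0$, so put $\mu_1:=\|Bx\|_2>0$ and $y:=Bx/\mu_1$, a unit vector with $Bx=\mu_1 y$. Then $Ay=(AB)x/\mu_1=(\lambda/\mu_1)\,x$, and since $\lambda/\mu_1>0$ and $\|x\|_2=1$ this reads $Ay=\lambda_1 x$ with $\lambda_1:=\|Ay\|_2=\lambda/\mu_1$. Both equilibrium conditions hold, so $(x,y)$ is an NE (with utilities $u_1=\lambda_1$, $u_2=\mu_1$, and $\lambda_1\mu_1=\lambda$, matching Lemma~\ref{lemma:1}).

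I expect the genuine obstacle to be the boundary case $\lambda=0$, where $AB$ is singular but the recipe above can stall because the eigenvector $x\in\ker(AB)$ may have $Bx=0$. I would split it as follows. If some unit $x\in\ker(AB)$ has $Bx\ne 0$, set $y:=Bx/\|Bx\|_2$; then $Ay=(AB)x/\|Bx\|_2=0$, player~1 is indifferent (so $x$ is a best response), $Bx=\|Bx\|_2\,y$, and $(x,y)$ is an NE. Otherwise $\ker(AB)=\ker B$, and since $AB$ is $m\times m$ and singular, $\operatorname{rank}B=\operatorname{rank}(AB)<m$, so $\ker B\ne\{0\}$; I would then run the mirror-image argument on $BA$, either finding a unit $y\in\ker(BA)$ with $Ay\ne 0$ (which yields an NE by the reflected construction $x:=Ay/\|Ay\|_2$, so that $Bx=0$ and $Ay=\|Ay\|_2 x$) or concluding $\ker A\ne\{0\}$ as well, in which case any unit $x\in\ker B$ and unit $y\in\ker A$ make both players indifferent and hence form an NE. The one subtlety to pin down here is that closing the last branch uses that $BA$ is singular too, i.e.\ that $AB$ and $BA$ genuinely \emph{share} the eigenvalue $0$; this is automatic when $m=n$ and is in any case the content intended by the theorem in light of Lemma~\ref{lemma:1}, but it is the point a fully rigorous write-up must address.
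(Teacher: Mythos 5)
Your argument is correct wherever the paper's is, and it diverges from the paper only in the degenerate sub-case. For $\lambda>0$, and for $\lambda=0$ with some unit kernel vector $x$ of $AB$ satisfying $Bx\ne 0$, your construction $y=Bx/\|Bx\|_2$ is exactly the paper's first bullet. You part ways in the remaining sub-case $Bx=0$: the paper picks $y$ by solving $Ay=kx$ when $\det A\ne 0$, or taking $y\in\ker A$ when $\det A=0$ (thereby silently assuming $A$ is square), whereas you reflect the whole construction through $BA$, which avoids inverting $A$ but instead needs $BA$ to be singular. Both closings are complete when $m=n$ (there $\det(AB)=\det A\,\det B=\det(BA)$, so your last branch does go through). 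The subtlety you flag at the end is real, and it is in fact a defect of the theorem as stated rather than of your write-up: when $m\ne n$ the spectra of $AB$ and $BA$ differ exactly by extra zeros, and the ``if'' direction can fail. For instance $A=(1,0)^T$, $B=(-1,0)$ gives $AB=\bigl(\begin{smallmatrix}-1&0\\0&0\end{smallmatrix}\bigr)$, which has the nonnegative eigenvalue $0$, yet the induced game is a matching-pennies-type game ($u_1=x_1y$, $u_2=-x_1y$ with $y=\pm1$) with no NE; here the relevant eigenvector is $x=e_2$ with $Bx=0$, $\ker A=\{0\}$ and $x\notin\mathrm{range}(A)$, so neither your closing nor the paper's applies. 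The statement that Lemma~\ref{lemma:1} actually supports, and that both proofs really establish, is that an NE exists iff $AB$ and $BA$ \emph{share} a nonnegative eigenvalue. In short: same skeleton as the paper in the main case, a different and arguably cleaner (inverse-free) treatment of the degenerate case, and you correctly isolated the one point at which the published proof is also incomplete.
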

            \begin{proof}
                The only-if direction follows from Lemma~\ref{lemma:1}. We now prove the if direction. Assume $AB$ has a nonnegative eigenvalue $\lambda$ with eigenvector $x$ such that $\|x\|_2 = 1$.
                \begin{itemize}
                \item If $Bx \ne 0$, let $y = \frac{Bx}{\|Bx\|_2}$. $(x, y)$ is an NE for the game, because $\frac{\lambda}{\|Bx\|_2} x = A y$, and $\|Bx\|_2 y = Bx$.
                \item If $Bx = 0$, $y \ne 0$ can be chosen such that either $Ay = k x$ for some $k > 0$, when $\det A \ne 0$, or $Ay = 0$, when $\det A = 0$. Also we assume $\|y\|_2 = 1$. Again $(x, y)$ is an NE for the game, because $k x = Ay$ for some $k \ge 0$, and the utility of player 2, $y^T B x$, is always $0$.
                \end{itemize}
            \end{proof}
    
As stated in Theorem~\ref{thm:1}, to solve an USG $A\times B$, i.e., to find all NEs or to ensure that no NE exists, it is equivalent to calculate all eigenvalues of $AB$ and the corresponding eigenvectors. Solving USGs is therefore effectively reduced to the {\em eigenvalue problem}, for which one may refer to the standard {\em Singular value decomposition}. We refer readers to \cite{S02} for more efficient algorithms.
            
    \subsection{Positive USGs}
    
    We now focus on a general class of USGs where there always exists a unique NE.
        
        \begin{definition}
            A USG $A\times B$ is {\em positive} if $A , B > 0$, and any strategy satisfies $x, y \ge 0$\footnote{We say a matrix $A > 0$ if $A_{ij} > 0$ for all $(i,j)$, and a vector $x \ge 0$ if $x_i \ge 0$ for all $i$.}.
        \end{definition}
        
        

Positive USGs (PUSGs) have many interesting properties that general USGs do not necessarily possess. Before we state these properties, we need the following lemma from linear algebra.
        \begin{lemma}
        \label{Perron-Frobenius}
            (Perron-Frobenius \cite{BP79}): For any square matrix $A > 0$, we have
            \begin{itemize}
                \item $A$ has an eigenvalue $\lambda > 0$. Moreover, for any other eigenvalue $\mu$ of $A$, $|\lambda| > |\mu|$. We call $\lambda$ the Perron-Frobenius value, or spectral radius of $A$, denoted as $\lambda = \rho(A)$.
                \item The eigenvalue $\lambda$ has algebraic and geometric multiplicity one.
                \item There is an eigenvector $x > 0$ of $A$ with an eigenvalue of $\lambda$.
                \item The only positive eigenvectors of $A$ have the form $k x$ for some $k > 0$. Moreover, all positive eigenvectors have corresponding eigenvalue $\lambda$.
            \end{itemize}
        \end{lemma}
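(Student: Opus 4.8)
The statement is the classical Perron--Frobenius theorem for a strictly positive matrix, so the plan is the textbook route: produce a positive eigenpair by a fixed-point argument, then repeatedly exploit the single elementary fact that $A>0$ sends every nonzero nonnegative vector to a strictly positive vector (so $Ax>0$ whenever $x\ge 0$, $x\ne 0$). Throughout, $|\cdot|$ applied to a vector denotes the entrywise modulus.

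First I would establish the positive eigenpair. Apply Brouwer's fixed-point theorem to the continuous self-map $f(x)=Ax/\|Ax\|_1$ of the standard simplex $\Delta=\{x\ge 0:\sum_i x_i=1\}$; this is well defined because $A>0$ forces $Ax>0$, hence $\|Ax\|_1>0$, for every $x\in\Delta$. A fixed point $x^{\ast}$ satisfies $Ax^{\ast}=\lambda x^{\ast}$ with $\lambda=\|Ax^{\ast}\|_1>0$, and $x^{\ast}=\lambda^{-1}Ax^{\ast}>0$. Running the same argument on $A^{T}$ gives a strictly positive left eigenvector $w>0$ with $w^{T}A=\lambda' w^{T}$; pairing, $\lambda'(w^{T}x^{\ast})=w^{T}Ax^{\ast}=\lambda(w^{T}x^{\ast})$ and $w^{T}x^{\ast}>0$ force $\lambda'=\lambda$. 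This yields the existence of $\lambda>0$ and the strictly positive eigenvector of the first and third bullets, and the left eigenvector $w>0$ will be the main tool below.

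Next, dominance. For any eigenvalue $\mu\in\mathbb{C}$ with eigenvector $v\ne 0$, the triangle inequality gives $|\mu|\,|v|=|Av|\le A|v|$ coordinatewise, so pairing with $w>0$ gives $|\mu|(w^{T}|v|)\le w^{T}A|v|=\lambda(w^{T}|v|)$ with $w^{T}|v|>0$, hence $|\mu|\le\lambda$. If $|\mu|=\lambda$, then $w^{T}(A|v|-\lambda|v|)=0$ together with $A|v|-\lambda|v|\ge 0$ and $w>0$ force $A|v|=\lambda|v|$, so $|v|$ is a Perron eigenvector and $|v|>0$; equality in $\bigl|\sum_j A_{ij}v_j\bigr|=\sum_j A_{ij}|v_j|$ with all $A_{ij}>0$ then forces the $v_j$ to share a common argument $e^{i\theta}$, i.e. $v=e^{i\theta}|v|$, whence $Av=\lambda v$ and $\mu=\lambda$. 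Thus $\mu\ne\lambda\Rightarrow|\mu|<\lambda$, completing the first bullet. For geometric simplicity, suppose $u$ is a real $\lambda$-eigenvector not proportional to $x^{\ast}$ (complex ones split into real and imaginary parts); after replacing $u$ by $-u$ if needed so $u$ has a positive coordinate, set $t^{\ast}=\min\{x^{\ast}_i/u_i:u_i>0\}>0$, so $z=x^{\ast}-t^{\ast}u$ is nonnegative, nonzero (since $u\not\parallel x^{\ast}$), and has a zero coordinate, contradicting $\lambda z=Az>0$. For algebraic simplicity, a rank-two generalized eigenvector would give $u$ with $(A-\lambda I)u=x^{\ast}$, but pairing with $w$ yields $0=w^{T}(A-\lambda I)u=w^{T}x^{\ast}>0$, impossible. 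Finally, if $v>0$ and $Av=\nu v$, then $\nu>0$ because $Av>0$, and pairing with $w$ gives $\nu=\lambda$, so $v\in\mathrm{span}(x^{\ast})$, i.e. $v=kx^{\ast}$ with $k>0$; conversely $x^{\ast}>0$, so these are exactly the positive eigenvectors. This gives the second and fourth bullets.

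The delicate step is the strict inequality $|\mu|<\lambda$ rather than merely $|\mu|\le\lambda$: the equality case must combine two ingredients, the squeeze $A|v|=\lambda|v|$ (which relies on having the strictly positive left eigenvector $w$ from the first step) and the equality condition in the complex triangle inequality (which relies on $A>0$ to force all coordinates of $v$ to have a common complex argument). Everything else in the argument is a short consequence of the observation that $A>0$ maps nonzero nonnegative vectors to strictly positive ones.
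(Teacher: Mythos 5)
Your proof is correct and complete. Note, however, that the paper itself offers no proof of this lemma: it is the classical Perron--Frobenius theorem for strictly positive matrices, quoted verbatim with a citation to \cite{BP79}, so there is no in-paper argument to compare against. What you have written is the standard textbook derivation --- Brouwer's fixed-point theorem on the simplex to produce the positive right (and left) eigenpair, then the left eigenvector $w>0$ as the universal pairing tool for the dominance bound $|\mu|\le\lambda$, the equality analysis via the complex triangle inequality to get strictness, the ``subtract until a coordinate vanishes'' argument for geometric simplicity, and the pairing $w^{T}(A-\lambda I)u=0$ against a rank-two generalized eigenvector for algebraic simplicity. All steps check out, including the two places that are easy to get wrong: the deduction $A|v|=\lambda|v|$ from $w^{T}(A|v|-\lambda|v|)=0$ with a nonnegative residual, and the use of strict positivity of $A$ to force a common complex argument among the coordinates of $v$ in the equality case. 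Your proof could stand in for the citation if the authors wanted the paper to be self-contained.
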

        

        \begin{lemma}
        \label{share}
            For payoff matrices $A > 0$, $B > 0$, $AB$ and $BA$ share at least one positive eigenvalue, which is their spectral radius.
        \end{lemma}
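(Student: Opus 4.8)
The plan is to reduce the statement entirely to the Perron--Frobenius lemma (Lemma~\ref{Perron-Frobenius}) together with the classical fact that $AB$ and $BA$ have the same nonzero eigenvalues. First I would observe that since $A>0$ and $B>0$ (and $n\ge 1$), both $AB$ and $BA$ are square matrices all of whose entries are strictly positive: each entry is a sum of products of positive reals. Hence Lemma~\ref{Perron-Frobenius} applies to each of them, so $AB$ has a spectral radius $\rho(AB)>0$ that is an actual eigenvalue (with a strictly positive eigenvector), and likewise $BA$ has $\rho(BA)>0$ as an eigenvalue.

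Next I would record the ``same nonzero spectrum'' fact: if $ABv=\lambda v$ with $v\neq 0$ and $\lambda\neq 0$, then $Bv\neq 0$ (otherwise $\lambda v = 0$), and $BA(Bv)=B(ABv)=\lambda(Bv)$, so $Bv$ exhibits $\lambda$ as an eigenvalue of $BA$; the reverse implication is symmetric. Combining the two paragraphs: $\rho(AB)$ is a positive, hence nonzero, eigenvalue of $AB$, therefore also an eigenvalue of $BA$, so $\rho(AB)\le\rho(BA)$ by maximality of the spectral radius; by the symmetric argument $\rho(BA)\le\rho(AB)$. Thus $\rho(AB)=\rho(BA)$, and this common positive number is an eigenvalue of both $AB$ and $BA$, which is exactly the claim.

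I do not expect a genuine obstacle here; the only points needing a little care are (i) confirming the Perron value is \emph{strictly} positive rather than merely nonnegative, which is immediate from positivity of the entries of $AB$ and $BA$, and (ii) justifying the $AB$/$BA$ nonzero-eigenvalue correspondence, which is the short computation above. As an alternative that avoids invoking that correspondence as a black box, one can argue directly with positive eigenvectors: taking $x>0$ with $ABx=\rho(AB)\,x$, the vector $y:=Bx$ satisfies $y>0$ and $BAy=B(ABx)=\rho(AB)\,y$, so $\rho(AB)$ is a positive eigenvalue of $BA$ admitting a positive eigenvector, and the last clause of Lemma~\ref{Perron-Frobenius} then forces $\rho(AB)=\rho(BA)$.
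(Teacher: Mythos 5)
Your proposal is correct and follows essentially the same route as the paper: the key step in both is the computation $BA(Bx)=B(ABx)=\rho(AB)\,Bx$ applied to the Perron eigenvector $x>0$ of $AB$, followed by a symmetry argument to conclude $\rho(AB)=\rho(BA)$ (the paper phrases this last step as a contradiction, you as two inequalities, which is the same argument). Your alternative closing via the positive-eigenvector clause of Lemma~\ref{Perron-Frobenius} is a minor variant, not a genuinely different approach.
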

        \begin{proof}
           Clearly, $AB$ and $BA$ are square matrices. Let $x > 0$ be an eigenvector of $AB$ with eigenvalue $\lambda = \rho(AB) > 0$, whose existence is guaranteed by Lemma~\ref{Perron-Frobenius}. Note that
            $$
                BA(Bx) = B(ABx) = \lambda(Bx).
            $$
            Namely, $Bx$ is an eigenvector of $BA$ with eigenvalue $\lambda$. It follows that $AB$ and $BA$ share the same positive eigenvalue $\lambda > 0$. Now suppose $\rho(BA) > \lambda$. By the same argument, we can see that $\rho(BA)$ is an eigenvalue of $AB$, a contradiction.
        \end{proof}
        
            With Lemma~\ref{share}, we are now able to derive two NEs for all PUSGs.
        \begin{theorem}
        \label{thm:4}
            There exists two NE $(x_1, y_1)$, $(x_2, y_2)$ for any PUSG, where
            \begin{itemize}
                \item $x_1 > 0$ is the unit eigenvector of $AB$ with eigenvalue $\lambda = \rho(AB)$.
                \item $y_1 = \frac{Bx_1}{\|Bx_1\|_2}$.\\
                where the utilities of the players obtained from $(x_1, y_1)$ are $\left(\frac{\lambda}{\|B x_1\|_2}, \|B x_1\|_2\right)$.
                \item $y_2 > 0$ is the unit eigenvector of $BA$ with eigenvalue $\lambda = \rho(BA)$.\footnote{Recall that $\rho(AB) = \rho(BA)$}
                \item $x_2 = \frac{Ay_2}{\|Ay_2\|_2}$.\\
                where the utilities of the players obtained from $(x_2, y_2)$ are $\left(\|A y_2\|_2, \frac{\lambda}{\|A y_2\|_2}\right)$.
            \end{itemize}
        \end{theorem}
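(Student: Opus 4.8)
The plan is to build the two equilibria directly from the Perron--Frobenius data supplied by Lemmas~\ref{Perron-Frobenius} and~\ref{share}, and then to check the equilibrium conditions using the best-response characterization derived at the start of Section~\ref{sec:NE}. First, since $A>0$ and $B>0$, the matrix $AB$ is square and (entrywise) positive, so by Lemma~\ref{Perron-Frobenius} it has a simple eigenvalue of largest modulus $\lambda=\rho(AB)>0$ with a positive eigenvector, which we scale to a unit vector $x_1>0$. Because $B>0$ and $x_1>0$ we have $Bx_1>0$, in particular $Bx_1\neq 0$, so $y_1:=Bx_1/\|Bx_1\|_2$ is a well-defined, strictly positive unit vector, hence a legal strategy in the PUSG.

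Next I would verify that $(x_1,y_1)$ is an NE. The key computation is
\[
Ay_1=\frac{ABx_1}{\|Bx_1\|_2}=\frac{\lambda}{\|Bx_1\|_2}\,x_1,
\]
so $Ay_1$ is a \emph{positive} scalar multiple of $x_1$; likewise $Bx_1=\|Bx_1\|_2\,y_1$ is a positive multiple of $y_1$. By the characterization at the beginning of Section~\ref{sec:NE}, on the full unit sphere the unique best response of player 1 to $y_1$ is $Ay_1/\|Ay_1\|_2=x_1$ and that of player 2 to $x_1$ is $Bx_1/\|Bx_1\|_2=y_1$. The one extra point needed for a PUSG is that the feasible set is only the nonnegative part of the sphere; but since $Ay_1>0$ and $Bx_1>0$, these unconstrained best responses already lie in the nonnegative orthant, so they remain best responses after the constraint is imposed (concretely, for any unit $x'\ge 0$ one has $x'^{T}Ay_1\le\|x'\|_2\|Ay_1\|_2=\|Ay_1\|_2$ by Cauchy--Schwarz, with equality exactly at $x'=x_1$, and symmetrically for player 2). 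Hence $(x_1,y_1)$ is an NE, and the utilities follow by direct substitution: $u_1=x_1^{T}Ay_1=\lambda/\|Bx_1\|_2$ and $u_2=y_1^{T}Bx_1=\|Bx_1\|_2$.

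Finally, the second equilibrium $(x_2,y_2)$ is produced by the symmetric argument with the roles of $A$ and $B$ exchanged: apply Lemma~\ref{Perron-Frobenius} to the positive square matrix $BA$ to get a unit eigenvector $y_2>0$ for $\rho(BA)$, which by Lemma~\ref{share} equals $\lambda=\rho(AB)$, set $x_2:=Ay_2/\|Ay_2\|_2>0$, note $Bx_2=(\lambda/\|Ay_2\|_2)\,y_2$ is a positive multiple of $y_2$, and conclude exactly as before, reading off $u_1=\|Ay_2\|_2$ and $u_2=\lambda/\|Ay_2\|_2$. I do not expect a serious obstacle here; the only step demanding care is the one just highlighted, namely checking that restricting the strategy set to the nonnegative orthant does not create a better response than the global one. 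This is precisely where positivity of $A$ and $B$ (and hence strict positivity of $Ay_1$, $Bx_1$, $Ay_2$, $Bx_2$) is used, and it is also the reason the statement insists on \emph{positive}, not merely nonnegative, payoff matrices.
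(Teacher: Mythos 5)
Your proof is correct and takes essentially the same route as the paper's: construct $(x_1,y_1)$ from the Perron--Frobenius unit eigenvector of $AB$ and verify that neither player can deviate profitably via Cauchy--Schwarz (the paper's inequality $x'^{T}x_1\le x_1^{T}x_1$), then argue symmetrically for $(x_2,y_2)$. Your explicit check that the unconstrained best responses already lie in the nonnegative orthant is a point the paper leaves implicit, but it is handled automatically by the same inequality.
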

        \begin{proof}
            We prove for the case of $(x_1, y_1)$. The case of $(x_2, y_2)$ is symmetric. By Lemma~\ref{share}, it is always feasible to pick $x_1$ as stated in the theorem. For player 1,
            \begin{align*}
                u_1(x', y_1) = &\ x'^T A y_1 = \frac{1}{\|B x_1\|_2} x'^T A B x_1 \\
                = &\ \frac{\lambda}{\|B x_1\|_2} x'^T x_1 \le \frac{\lambda}{\|B x_1\|_2} x_1^T x_1 \\
                = &\ \frac{\lambda}{\|B x_1\|_2}.
            \end{align*}
            For player 2,
            \begin{align*}
                u_2(x_1, y') = &\ y'^T B x_1 = y'^T \|B x_1\|_2 y_1 \\
                \le &\ \|B x_1\|_2 y_1^T y_1 = \|B x_1\|_2.
            \end{align*}
            In other words, neither player has profitable deviation in $(x_1, y_1)$.
        \end{proof}
        
            Theorem~\ref{thm:4} derives a pair of symmetric NEs for any PUSG. One might wonder whether the two NEs are identical? This is indeed the case. We dedicate Subsection~\ref{sec: unique} to this result.
        
            In fact, there is a symmetric NE in a PUSG if the payoff matrices satisfy certain additional conditions. Before we state these conditions, we need the following technical lemma.
        \begin{lemma}
            For square matrices $A > 0$, $B > 0$ such that $AB = BA$, $A$ and $B$ share the same one-dimensional eigenspace of spectral radius.
        \end{lemma}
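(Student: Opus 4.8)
The plan is to exploit commutativity together with the Perron--Frobenius structure guaranteed by Lemma~\ref{Perron-Frobenius}. Let $u > 0$ be the Perron eigenvector of $A$, so that $Au = \rho(A)\,u$, and recall that the eigenspace of $A$ associated with $\rho(A)$ is one-dimensional. First I would observe that $Bu$ is again an eigenvector of $A$ for the same eigenvalue: since $AB = BA$, we have $A(Bu) = B(Au) = \rho(A)\,(Bu)$. Because the $\rho(A)$-eigenspace of $A$ has dimension one, this forces $Bu = c\,u$ for some scalar $c$.

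Next I would pin down the sign of $c$. Since $B > 0$ and $u > 0$, the vector $Bu$ is strictly positive, so $c > 0$; in particular $u$ is a \emph{positive} eigenvector of $B$. Now invoke the last bullet of Lemma~\ref{Perron-Frobenius} applied to $B$: every positive eigenvector of $B$ is a positive multiple of its Perron eigenvector and has eigenvalue $\rho(B)$. Hence $c = \rho(B)$ and $u$ spans the $\rho(B)$-eigenspace of $B$, which is again one-dimensional. Therefore the line $\mathbb{R}u$ is simultaneously the spectral-radius eigenspace of $A$ and of $B$, which is exactly the claim.

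There is essentially no hard step here --- the argument is a two-line consequence of commutativity plus Perron--Frobenius. The only point that requires a moment of care is the inference $Bu = c\,u$: it relies on the \emph{geometric} multiplicity-one part of Lemma~\ref{Perron-Frobenius}, not merely algebraic multiplicity one, so I would make sure to cite that part explicitly. Running the symmetric argument starting from the Perron eigenvector of $B$ yields the same line, confirming consistency and showing the choice of which matrix to start from is immaterial.
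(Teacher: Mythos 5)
Your proof is correct and follows essentially the same route as the paper's: use commutativity to show $Bu$ is a $\rho(A)$-eigenvector of $A$, deduce $Bu = cu$ from the one-dimensionality of that eigenspace, and then apply the positive-eigenvector clause of Perron--Frobenius to $B$ to conclude $c = \rho(B)$. Your version is slightly more explicit than the paper's about why $c > 0$ and which part of Lemma~\ref{Perron-Frobenius} is invoked at the last step, but the argument is the same.
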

        \begin{proof}
            Let $\lambda = \rho(A)$, $x > 0$ be an eigenvector of $A$ whose corresponding eigenvalue is $\lambda$, then
            $$
                A (B x) = B (A x) = \lambda (B x),
            $$
            namely $Bx$ is an eigenvector of $A$ whose eigenvalue is $\lambda$. By Lemma~\ref{Perron-Frobenius}, the eigenspace of $\lambda$ is one-dimensional, which implies that $B x = \mu x$ for some $\mu$. Again by Lemma~\ref{Perron-Frobenius}, $x$ belongs to the eigenspace of the spectral radius of $B$, or equivalently $\mu = \rho(B)$.
        \end{proof}
        
        If $AB = BA$, the corresponding PUSG has a symmetric NE.
        
        \begin{theorem}
        \label{thm:3.3}
            There is a symmetric NE $(x, x)$ for any PUSG with square payoff matrices $A\times B$ such that $AB = BA$. The NE utilities are $(\rho(A), \rho(B))$.
        \end{theorem}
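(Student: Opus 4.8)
The plan is to read off the equilibrium directly from the lemma immediately preceding the theorem. Since $A>0$, $B>0$, and $AB=BA$, that lemma furnishes a common one-dimensional eigenspace of spectral radius: there is a unit vector $x$ with $x>0$, $Ax=\rho(A)\,x$, and $Bx=\rho(B)\,x$ simultaneously (strict positivity and normalizability up to a positive scalar come from Perron--Frobenius, Lemma~\ref{Perron-Frobenius}). The claim is that the symmetric profile $(x,x)$ is the desired NE, and the whole proof is really just a verification using this $x$.

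The key steps are as follows. First, recall from the structure of NE in Subsection~\ref{sec:NE} that a profile $(x',y')$ is an NE exactly when $\|Ay'\|_2\,x' = Ay'$ and $\|Bx'\|_2\,y' = Bx'$. Plugging in $y'=x'=x$: we have $Ax=\rho(A)\,x$, and since $\|x\|_2=1$ and $\rho(A)>0$ (Perron--Frobenius), $\|Ax\|_2=\rho(A)$, so the first condition $\|Ax\|_2\,x = Ax$ holds; symmetrically $\|Bx\|_2=\rho(B)$ and $\|Bx\|_2\,x = Bx$ holds. One point that needs a line of care is that in a PUSG best responses are taken over \emph{nonnegative} unit vectors, so I would note that the unconstrained best response to $x$ for player 1, namely $Ax/\|Ax\|_2 = x$, is itself strictly positive (because $Ax=\rho(A)x>0$), hence it is also the best response within the restricted strategy set; the same holds for player 2. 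Thus neither player can profitably deviate from $(x,x)$.

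Finally, I would compute the equilibrium payoffs: $u_1(x,x)=x^{T}Ax=\rho(A)\,x^{T}x=\rho(A)$ and $u_2(x,x)=x^{T}Bx=\rho(B)\,x^{T}x=\rho(B)$, matching the statement. I do not expect a genuine obstacle here — the substantive work was already done in the preceding lemma (existence of a \emph{shared} Perron eigenvector under $AB=BA$); the only thing to be vigilant about is the nonnegativity constraint on strategies, which is automatically satisfied because the Perron eigenvector is strictly positive and the relevant eigenvalues are positive.
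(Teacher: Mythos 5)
Your proposal is correct and follows essentially the same route as the paper: both invoke the preceding lemma to obtain the shared positive unit Perron eigenvector $x$ of $A$ and $B$, and then verify that neither player can profitably deviate from $(x,x)$, the paper doing so via the direct Cauchy--Schwarz bound $x'^{T}Ax=\rho(A)\,x'^{T}x\le\rho(A)$ and you via the equivalent best-response characterization $\lambda x = Ay$ from Section~\ref{sec:NE}. Your extra remark that the unconstrained best response is strictly positive and hence remains optimal under the nonnegativity restriction of PUSGs is a small point of care the paper leaves implicit, but it does not change the argument.
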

        \begin{proof}
            Let $x > 0$ be the unit eigenvector of $A$ whose corresponding eigenvalue is $\rho(A)$ (and therefore the unit eigenvector of $B$ whose eigenvalue is $\rho(B)$). For player 1,
            \begin{align*}
                u_1(x', x) = &\ x'^T A x = \rho(A) x'^T x \\
                \le &\ \rho(A) x^T x = \rho(A).
            \end{align*}
            For player 2,
            \begin{align*}
                u_2(x', x) = &\ x'^T B x = \rho(B) x'^T x \\
                \le &\ \rho(B) x^T x = \rho(B).
            \end{align*}
            Neither player has a profitable deviation in $(x,x)$.
        \end{proof}
    \subsection{Uniqueness of NE in PUSGs}
    \label{sec: unique}
        
        One of the most appealing properties of all PUSGs is that they have unique NE.
        \begin{theorem}
        \label{thm:3.4}
            Any PUSG has an unique NE.
        \end{theorem}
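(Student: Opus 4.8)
The plan is to prove that every NE of a PUSG must coincide with the equilibrium $(x_1,y_1)$ already exhibited in Theorem~\ref{thm:4}; since that profile is an NE, uniqueness follows at once (and in particular this shows $(x_1,y_1)=(x_2,y_2)$).

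First I would pin down the structure of an arbitrary NE profile $(x,y)$ of a PUSG. Because $\|y\|_2=1$ forces $y\neq 0$ and $A>0$, the vector $Ay$ is \emph{strictly} positive, so the unique unit vector maximizing $x'^{T}(Ay)$ over the whole unit sphere is $x=Ay/\|Ay\|_2$, and this maximizer is itself strictly positive. Hence the constraint $x\ge 0$ in the definition of a PUSG is slack and imposes nothing extra: in any NE we must have $\lambda x = Ay$ with $\lambda=\|Ay\|_2>0$ and $x>0$, and symmetrically $\mu y = Bx$ with $\mu=\|Bx\|_2>0$ and $y>0$.

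Next, combining the two relations gives $(AB)x = A(\mu y)=\mu\lambda\,x$, so $x$ is a strictly positive eigenvector of $AB$ with eigenvalue $\lambda\mu>0$. Since $A>0$ and $B>0$ (with inner dimension $n\ge 1$), the product $AB$ is an $m\times m$ positive matrix, so Lemma~\ref{Perron-Frobenius} applies: its only positive eigenvectors are the positive scalar multiples of a single unit eigenvector, all with eigenvalue $\rho(AB)$. As $\|x\|_2=1$ and $x>0$, this forces $x=x_1$ and $\lambda\mu=\rho(AB)$. Then $y=Bx/\|Bx\|_2=Bx_1/\|Bx_1\|_2=y_1$ is determined as well. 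Thus $(x_1,y_1)$ is the only candidate NE, and Theorem~\ref{thm:4} guarantees it is indeed one.

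The only place that needs genuine care — and the crux of the argument — is ruling out \emph{boundary} equilibria: a priori, a best response restricted to the nonnegative part of the sphere need not be an eigenvector of $AB$, so one cannot immediately invoke the eigenvector characterization from Section~\ref{sec:NE}. The resolution is precisely the observation in the first step, that strict positivity of $A$ and $B$ makes every (unconstrained) best response automatically strictly positive, so the nonnegativity constraint is never active; once this is established, the Perron-Frobenius lemma does the rest.
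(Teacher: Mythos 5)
Your proof is correct and follows the same overall skeleton as the paper's: any NE yields a positive eigenvector of the positive matrix $AB$, Perron--Frobenius forces that eigenvector to be the (unit) Perron vector $x_1$, and $y$ is then pinned down as $Bx_1/\|Bx_1\|_2$. The one place you genuinely diverge is in how positivity of $x$ is established. The paper argues by contradiction: it invokes Lemma~\ref{lemma:1} to get $ABx=\lambda x$, supposes $\lambda\neq\rho(AB)$, concludes from Perron--Frobenius that $x$ must then have a zero coordinate, and contradicts this via $0=(ABx)_i\ge \min_{j}(AB)_{ij}\|x\|_1>0$. You instead observe directly that $Ay>0$ whenever $y\ge 0$, $y\neq 0$, so the best response is $Ay/\|Ay\|_2$, which is automatically strictly positive. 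Your route has a concrete advantage: it justifies why the eigenvector characterization of Section~\ref{sec:NE} (derived for best responses over the \emph{whole} unit sphere) still applies when strategies are constrained to the nonnegative orthant, since the unconstrained maximizer is feasible and hence also the constrained one. The paper applies Lemma~\ref{lemma:1} to a PUSG without comment on this point, so your closing paragraph identifies and closes a small gap that the published argument leaves implicit. Both arguments then finish identically via the one-dimensionality of the Perron eigenspace.
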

        \begin{proof}
           Let $(x, y)$ be an arbitrary NE of PUSG with payoff matrices $A$ and $B$, whose existence has been established in Theorem~\ref{thm:4}. By Lemma~\ref{lemma:1},
            $$
                \exists \lambda > 0,\ \mu > 0,\ \mathrm{s.t.}\ ABx = \lambda x,\ BAy = \mu y
            $$
            We will show that $\lambda$ is the spectral radius of $AB$, and $x$ is the corresponding positive unit eigenvector. The case of $y$ is symmetric. Assume $\lambda \ne \rho(AB)$. By Lemma~\ref{Perron-Frobenius}, there must be some $i \in [n]$ such that $x_i = 0$, since there are no other positive eigenvectors beside those of the spectral radius. Note that $\lambda > 0$, $AB > 0$.
            \begin{align*}
                0 & = \lambda x_i = (ABx)_i = \sum_j (AB)_{ij} x_j \\
                & \ge \min (AB)_{ij} \|x\|_1 > 0,
            \end{align*}
            a contradiction. Therefore $\lambda = \rho(AB)$. Again by Lemma~\ref{Perron-Frobenius}, the eigenspace of $\lambda$ is one-dimensional. Namely $x$ is the unique positive eigenvector of $\lambda$ such that $\|x\|_2 = 1$. The same argument works for $y$. To conclude, we prove that $(x, y)$ is the unique NE.
        \end{proof}

        \begin{corollary}
            Any PUSG has an unique NE, which has the form stated in Theorem~\ref{thm:4}. Moreover, the two symmetric NEs in Theorem~\ref{thm:4} are identical.
        \end{corollary}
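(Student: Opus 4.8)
The plan is to deduce the corollary directly from Theorem~\ref{thm:3.4} together with the structure already exposed in the proofs of Lemma~\ref{lemma:1} and Theorem~\ref{thm:4}, so the argument is essentially bookkeeping. Theorem~\ref{thm:3.4} guarantees that a PUSG has a \emph{unique} NE, while Theorem~\ref{thm:4} exhibits two equilibria $(x_1,y_1)$ and $(x_2,y_2)$; uniqueness forces $(x_1,y_1)=(x_2,y_2)$, which is precisely the last sentence of the corollary.

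For the assertion that the unique NE has the form stated in Theorem~\ref{thm:4}, I would reread the proof of Theorem~\ref{thm:3.4} as a structural characterization of \emph{every} NE, not merely as a counting statement: it shows that for any NE $(x,y)$ one has $ABx=\lambda x$ with $\lambda=\rho(AB)$ and $x>0$ the unit Perron--Frobenius eigenvector of $AB$, and symmetrically that $y$ is the unit Perron--Frobenius eigenvector of $BA$. This already matches the descriptions of $x_1$ and of $y_2$ in Theorem~\ref{thm:4}. It then remains to recover the ``cross'' relations $y=\frac{Bx}{\|Bx\|_2}$ and $x=\frac{Ay}{\|Ay\|_2}$. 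By the zero-angle (first-order) conditions recalled in Section~\ref{sec:NE}, equivalently by Lemma~\ref{lemma:1}, any NE satisfies $\mu y = Bx$ with $\mu=\|Bx\|_2$; since $B>0$ and $x>0$ we have $Bx>0$, hence $Bx\neq 0$ and $y=\frac{Bx}{\|Bx\|_2}$, and symmetrically $x=\frac{Ay}{\|Ay\|_2}$. Therefore the unique NE $(x,y)$ coincides with $(x_1,y_1)$ and with $(x_2,y_2)$ simultaneously, which re-proves that the latter two are identical.

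I do not anticipate any genuine obstacle: the corollary is a direct consequence of the preceding theorems. The only points worth a sentence of care are (i) the non-degeneracy $Bx\neq 0$ and $Ay\neq 0$, which is immediate from strict positivity of $B$, $A$ and of the Perron eigenvectors, and (ii) making explicit that the proof of Theorem~\ref{thm:3.4} pins down the \emph{form} of each NE, so that no separate computation is needed here.
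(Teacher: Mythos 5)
Your proposal is correct and matches the paper's (implicit) reasoning: the corollary is stated without proof precisely because it follows immediately by combining Theorem~\ref{thm:4} (which exhibits two NEs of the stated form) with Theorem~\ref{thm:3.4} (uniqueness), which forces them to coincide. Your extra re-derivation of the cross relations $y = Bx/\|Bx\|_2$ and $x = Ay/\|Ay\|_2$ is sound but not strictly needed, since Theorem~\ref{thm:4} already certifies that $(x_1,y_1)$ has that form and is an NE.
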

        
            Next, we show the unique NE of a PUSG can be efficiently found via a natural learning process.
\section{Solving PUSGs via Cournot adjustments}
    
        In this section, we show that the unique NE of any PUSG can be resulted when both players follow a well-known learning process called {\em Cournot adjustments
        }. This is remarkable property since it states that players can learn to play NE even without any information of each other's payoff matrix.
        
    \subsection{Cournot adjustments}
        
Define {\em Cournot adjustments} as follows,

            \begin{enumerate}
                \item In the first round, each player $i$ plays any positive strategy $s_i^0 > 0$.
                \item In round $t$, each player $i$ observes $s_{-i}^t$, the strategy of player $-i$.
                \item In round $t+1$, each player $i$ plays her best response against $s_{-i}^t$. Namely
                    $$
                        s_i^{t + 1} = \frac{A_i s_{-i}^t}{\|A_i s_{-i}^t\|_2}.
                    $$
                \item Iterate until no player updates her strategy.
            \end{enumerate}
            
Cournot adjustments define a natural protocol for players to learn to play a game over time. It is appealing when players do not know others' payoff matrices and for whatever reason that the players cannot perform equilibrium computation upfront. It is known that, for any standard games, a carefully designed better response dynamics can converge to some mixed-strategy Nash equilibrium (aka. Nash's proof), but may take exponential number of rounds. In the following, we show that this procedure thoroughly exploits the properties of PUSGs and finds efficiently the unique NE for any PUSG in logarithmic number of rounds with respect to initial error.
      

    \subsection{Convergence of Cournot adjustments in PUSGs}
        
        To formally state and prove the convergence result, we need the following proposition from numerical analysis.
        \begin{lemma}
        \label{convergence}
            (Convergence of power iteration \cite{MP29}): For any positive square matrix $A$ whose eigenvalue with the largest modulus is $\lambda$ and the corresponding eigenspace is $E$, let $x_0$ be an arbitrary unit vector such that $x$ is not orthogonal to $E$. Let
            $$
                x^t = \frac{A x^{t - 1}}{\|A x^{t - 1}\|_2}.
            $$
            It is guaranteed that $x^t$ converges to $x^*$, where $A x^* = \lambda x^*$. Moreover,
            $$
                \forall p \in \mathbb{Z}^+ \cup \{\infty\},\ \exists r \in (0, 1), c \in \mathbb{R}^+,\ \mathrm{s.t.}
            $$
            $$
                \|x^t - x^*\|_p \le c r^t.
            $$
        \end{lemma}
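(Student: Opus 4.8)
Although this is a classical statement about power iteration (hence the citation to \cite{MP29}), here is the line of argument I would carry out. The plan is to split the dynamics into the Perron direction and everything else, show that the ``everything else'' part shrinks geometrically under renormalization, and read off the convergence rate. First I would apply Lemma~\ref{Perron-Frobenius}: since $A > 0$, the spectral radius $\lambda = \rho(A)$ is a simple eigenvalue, its eigenspace $E$ is one-dimensional and spanned by some $v > 0$, and every other eigenvalue $\mu$ of $A$ obeys $|\mu| < \lambda$. Using the spectral projection onto $\lambda$, I would write $\mathbb{R}^n = E \oplus F$ with $F$ the $A$-invariant complement (the span of the generalized eigenspaces of the remaining eigenvalues), and decompose the start vector as $x_0 = \alpha v + w_0$ with $w_0 \in F$. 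The hypothesis that $x_0$ is not orthogonal to $E$ is precisely what forces $\alpha \ne 0$; in the Cournot-adjustment application the start vectors and $v$ are strictly positive, so $\alpha > 0$ with no further argument.

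Next I would unwind the recursion to $x^t = A^t x_0 / \|A^t x_0\|_2$ (well-defined since $A^t x_0 \ne 0$ for $t$ large, and for all $t$ when $x_0 > 0$), and use invariance of $E$ and $F$ to write $A^t x_0 = \alpha \lambda^t v + A^t w_0$. Setting $\hat\mu = \rho(A|_F) < \lambda$ and fixing any $r \in (\hat\mu/\lambda, 1)$, Gelfand's formula $\rho(A|_F) = \lim_t \|(A|_F)^t\|^{1/t}$ supplies a constant $C$ with $\|A^t w_0\|_2 \le C (r\lambda)^t$ for every $t$. Dividing by $\lambda^t$ gives $A^t x_0 / \lambda^t = \alpha v + e_t$ with $\|e_t\|_2 \le C' r^t \to 0$, so after renormalization $x^t \to \alpha v / \|\alpha v\|_2 =: x^*$, and $x^* \in E$ gives $A x^* = \lambda x^*$ as required.

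For the rate I would invoke the elementary bound $\left\| \frac{a}{\|a\|_2} - \frac{b}{\|b\|_2} \right\|_2 \le \frac{2\|a-b\|_2}{\max(\|a\|_2,\|b\|_2)}$ with $a = \alpha v + e_t$, $b = \alpha v$, which yields $\|x^t - x^*\|_2 \le \frac{2}{|\alpha|\,\|v\|_2}\,\|e_t\|_2 \le c\, r^t$; then, since all norms on $\mathbb{R}^n$ are equivalent, $\|x^t - x^*\|_p \le \kappa_{n,p}\,\|x^t - x^*\|_2 \le c\,\kappa_{n,p}\,r^t$ for every $p \in \mathbb{Z}^+ \cup \{\infty\}$, which is the stated inequality. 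The only genuinely delicate point is the very first step: making rigorous that ``$x_0$ not orthogonal to $E$'' yields a nonzero Perron component $\alpha$. I expect to settle this cleanly with the left Perron eigenvector $u > 0$ (so $u^T A = \lambda u^T$ and $u^T w = 0$ for $w \in F$): then $u^T x_0 = \alpha\, u^T v$ with $u^T v > 0$, so $\alpha \ne 0$ iff $u^T x_0 \ne 0$, which in particular holds whenever $x_0 \ge 0$ and $x_0 \ne 0$. Everything after that is bookkeeping with the spectral decomposition and norm equivalence.
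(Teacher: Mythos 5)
The paper does not prove this lemma at all: it is imported as a black-box citation to the numerical-analysis literature, so there is no in-paper argument to compare against. Your proposal is the standard spectral-decomposition proof of power iteration and it is correct: Perron--Frobenius gives a simple dominant eigenvalue $\lambda>0$ with an $A$-invariant complement $F$ on which $\rho(A|_F)<\lambda$, Gelfand's formula gives the geometric decay $\|A^t w_0\|_2 \le C (r\lambda)^t$, the normalization inequality $\bigl\| \tfrac{a}{\|a\|_2}-\tfrac{b}{\|b\|_2} \bigr\|_2 \le \tfrac{2\|a-b\|_2}{\max(\|a\|_2,\|b\|_2)}$ converts this into $\|x^t-x^*\|_2 \le c r^t$, and equivalence of norms on $\mathbb{R}^n$ handles all $p$. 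You also correctly flag the one genuinely delicate point, which is in fact a small imprecision in the lemma as stated: for non-symmetric $A$ the condition guaranteeing a nonzero Perron coefficient is $u^T x_0 \ne 0$ for the \emph{left} Perron vector $u$, not literal non-orthogonality to $E=\mathrm{span}(v)$; your resolution via $u$ is the right fix, and the issue is moot in the paper's only application (Cournot adjustment) where $x^0>0$ forces both conditions. Two minor loose ends worth a sentence each if you write this out in full: the iteration could in principle hit $Ax^{t-1}=0$ for singular $A>0$ at small $t$ (again excluded when $x^0>0$), and the limit $x^*=\alpha v/\|\alpha v\|_2$ may be the negative Perron vector when $\alpha<0$, which still satisfies $Ax^*=\lambda x^*$ as the lemma requires.
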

        
         In presence of Lemma~\ref{convergence}, we now state a convergence result of best response dynamics in PUSGs.
        \begin{theorem}
           If both players follow Cournot adjustment, the strategy sequence $(x^t, y^t)$ converges to the unique NE of the PUSG with an exponentially decreasing error. Or in other words, there is a linear convergence\footnote{Linear convergence is another way of saying the error diminishes exponentially fast in the number of iterations.} for $(x^t, y^t)$ to the NE, following Cournot adjustment.
        \end{theorem}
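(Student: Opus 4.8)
The plan is to show that two rounds of Cournot adjustment amount to exactly one step of power iteration on the positive matrix $AB$ (for player~1) and on $BA$ (for player~2), and then to invoke Lemma~\ref{convergence} together with the uniqueness result of Theorem~\ref{thm:3.4}. First I would unwind the update rules $x^{t+1}=Ay^t/\|Ay^t\|_2$ and $y^{t+1}=Bx^t/\|Bx^t\|_2$. Because the scalar normalization in one round is washed out by the next application of $A$ or $B$, a straightforward induction gives, for the even-indexed iterates,
$$
x^{2k}=\frac{(AB)^k x^0}{\|(AB)^k x^0\|_2},\qquad y^{2k}=\frac{(BA)^k y^0}{\|(BA)^k y^0\|_2},
$$
and for the odd-indexed iterates,
$$
x^{2k+1}=\frac{(AB)^k(Ay^0)}{\|(AB)^k(Ay^0)\|_2},\qquad y^{2k+1}=\frac{(BA)^k(Bx^0)}{\|(BA)^k(Bx^0)\|_2}.
$$
Along the way one checks, again by induction, that every iterate stays strictly positive (since $A,B>0$ and $x^0,y^0>0$), so no normalization is ever by the zero vector.

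Next I would apply the power-iteration lemma to $AB$, which is positive as a product of positive matrices; by Perron--Frobenius (Lemma~\ref{Perron-Frobenius}) its eigenvalue of largest modulus is the simple positive value $\rho(AB)$, with a one-dimensional eigenspace spanned by some $x_1>0$. The four seed vectors for the subsequences above --- $x^0$, $Ay^0$ on player~1's side, $y^0$, $Bx^0$ on player~2's side --- are all strictly positive, hence have strictly positive inner product with the corresponding positive Perron eigenvector and are therefore not orthogonal to the top eigenspace. Consequently Lemma~\ref{convergence} yields that $x^{2k}$ and $x^{2k+1}$ both converge to $x_1$ with geometric error bounds $c_i r_i^{k}$, $r_i\in(0,1)$; splicing the two subsequences (take $R=\max\{r_1,r_2\}<1$ and a suitable constant $C$) gives $\|x^t-x_1\|_2\le C R^{\lfloor t/2\rfloor}$, i.e. linear convergence of the whole sequence. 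The same argument applied to $BA$ shows $y^t$ converges linearly to the positive Perron eigenvector $y^*$ of $BA$.

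Finally I would identify the limit $(x_1,y^*)$ with the unique NE. From $BA(Bx_1)=B(ABx_1)=\rho(AB)(Bx_1)$ and $Bx_1>0$, Lemma~\ref{Perron-Frobenius} forces $y^*=Bx_1/\|Bx_1\|_2$, so $(x_1,y^*)$ is precisely the equilibrium $(x_1,y_1)$ of Theorem~\ref{thm:4}, which by Theorem~\ref{thm:3.4} is the unique NE of the PUSG. Hence $(x^t,y^t)$ converges to the NE with exponentially decreasing error. The one place that needs care is the bookkeeping that merges the even- and odd-subsequence rates into a single linear rate for $(x^t,y^t)$; beyond that, everything is a direct consequence of Perron--Frobenius and Lemma~\ref{convergence}, with the key structural observation being the reduction of a Cournot round-pair to power iteration on $AB$ and $BA$.
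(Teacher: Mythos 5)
Your proposal is correct and follows essentially the same route as the paper: unwind the Cournot updates so that two rounds become one power-iteration step on $AB$ (resp.\ $BA$), check that the positive seeds are not orthogonal to the Perron eigenspace, invoke Lemma~\ref{convergence}, and identify the limit with the unique NE of Theorem~\ref{thm:3.4}. You are in fact slightly more careful than the paper on two points --- treating the odd-indexed iterates as their own power-iteration subsequence rather than appealing to continuity, and explicitly verifying that the limit pair $(x_1, y^*)$ coincides with the equilibrium of Theorem~\ref{thm:4} --- but these are refinements of the same argument, not a different one.
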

        \begin{proof}
            Let $A$ and $B$ be the payoff matrices. We can explicitly derive the strategy expressions of Cournot adjustments in round $t$ as follows,
            $$
                x^t = A y^{t - 1},\ y^t = B x^{t - 1}.
            $$
            It follows that
            $$
                x^{2k} = (AB)^k x^0,\ y^{2k} = (BA)^k y^0,\ \forall k \in \mathbb{N}.
            $$
            Since we choose $x^0 > 0$, $y^0 > 0$, by Lemma~\ref{Perron-Frobenius}, it is impossible that $x^0$ (resp.\ $y^0$) is orthogonal to the eigenspace of the spectral radius of $AB$ (resp.\ $BA$). By Lemma~\ref{convergence}, as $k$ grows, $x^{2k}$ converges to the positive unit eigenvector of $AB$ exponentially fast, and $y^{2k}$ converges to that of $BA$. Therefore $(x^{2k}, y^{2k})$ converges to the unique PSNE exponentially fast. As $(x^{2k}, y^{2k})$ converges, $(x^{2k + 1}, y^{2k + 1})$ converges as well, concluding the proof.
        \end{proof}

\section{Approximating mixed-strategy equilibrium in standard games via USGs}

It is well-known that computing a mixed-strategy Nash equilibrium (MSNE) in standard two-player games is PPAD-complete \cite{CD06}. In this section, we show that our understanding of USG can help us to compute an approximate MSNE for any standard games.    

    \subsection{Approximation Scheme}
        
Consider any PUSG. By theorems we have derived so far, one can easily compute the unique NE $(x, y)$ of the PUSG. We now normalize $x$ and $y$ to be $x'$ and $y'$, so that $\|x'\|_1 = \|y'\|_1 = 1$. Our main finding is that $(x', y')$ is a multiplicative $O\left(\sqrt{\max(m, n)}\right)$-approximate MSNE\footnote{A {\em multiplicative $k$-approximate MSNE} denotes a strategy profile where no player can improve her utility by $k$ times via deviation.} for the underlying standard two-player game. 
        
           Call this approximation scheme the {\em simple approximate scheme}.
    \subsection{Approximation via simple approximate scheme}
        
   Once again, before we state and prove our result, we need the following technical lemma.
        \begin{lemma}
        \label{approx}
            $$
                \min_{x \in \mathbb{R}^n,\ \|x\|_1 = 1} \left\{\frac{\|x\|_2^2}{\|x\|_\infty}\right\} = \frac{2}{\sqrt{n} + 1}. 
            $$
        \end{lemma}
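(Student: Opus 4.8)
The plan is to reduce this constrained minimization over $\mathbb{R}^n$ to a one-variable calculus problem. Since $\|x\|_1$, $\|x\|_2$ and $\|x\|_\infty$ all depend only on $(|x_1|,\dots,|x_n|)$, I may assume without loss of generality that $x \ge 0$; relabelling coordinates, write $x_1 = \|x\|_\infty =: M$. From $\|x\|_1 = 1$ we get $1 = \sum_i x_i \le nM$, hence $M \ge 1/n$, and trivially $M \le 1$, so $M$ ranges over $[1/n,1]$.

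For the lower bound ($\ge$), fix $M$. The remaining coordinates satisfy $\sum_{i\ge 2} x_i = 1 - M$, so Cauchy--Schwarz gives $\sum_{i\ge 2} x_i^2 \ge (1-M)^2/(n-1)$. Therefore
$$
    \frac{\|x\|_2^2}{\|x\|_\infty} \;\ge\; \frac{M^2 + (1-M)^2/(n-1)}{M} \;=\; \frac{n}{n-1}\,M + \frac{1}{(n-1)M} - \frac{2}{n-1} \;=:\; f(M).
$$
Then I minimize $f$ over $[1/n,1]$: since $f''(M) = 2/((n-1)M^3) > 0$, $f$ is strictly convex, and $f'(M)=0$ gives $nM^2 = 1$, i.e.\ $M = 1/\sqrt{n}$, which lies in $[1/n,1]$ because $\sqrt{n}\le n$. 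Evaluating, $f(1/\sqrt{n}) = 2(\sqrt{n}-1)/(n-1) = 2/(\sqrt{n}+1)$, so the ratio is always at least $2/(\sqrt{n}+1)$.

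For attainment ($\le$), I exhibit the minimizer: take $x$ with $x_1 = 1/\sqrt{n}$ and $x_2 = \dots = x_n = (1 - 1/\sqrt{n})/(n-1)$. One checks $\|x\|_1 = 1$, and $\|x\|_\infty = x_1 = 1/\sqrt{n}$ since $(1-1/\sqrt{n})/(n-1) \le 1/\sqrt{n}$ (again equivalent to $\sqrt{n}\le n$); moreover the Cauchy--Schwarz step above is tight for this $x$, so its ratio equals $f(1/\sqrt{n}) = 2/(\sqrt{n}+1)$.

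The computations are routine; the only points needing a little care are (i) that the critical point $M = 1/\sqrt{n}$ actually falls inside the admissible interval $[1/n,1]$, which forces a small inequality $\sqrt{n}\le n$, and (ii) that in the attainment step $x_1$ really is a largest coordinate, so that $\|x\|_\infty$ is as intended. Beyond these there is no genuine obstacle, only the idea of the reduction: fix $\|x\|_\infty$, split off a maximizing coordinate, bound the rest by Cauchy--Schwarz, and optimize the resulting scalar function.
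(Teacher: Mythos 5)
Your proof is correct and follows essentially the same route as the paper's: fix $t=\|x\|_\infty$, bound the remaining coordinates' sum of squares by $(1-t)^2/(n-1)$, and minimize the resulting one-variable function to get $2/(\sqrt{n}+1)$. You additionally exhibit the minimizer to establish attainment (which the lemma's claimed equality requires but the paper's printed proof omits), so your write-up is, if anything, slightly more complete.
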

        \begin{proof}
        
            Let $t = \|x\|_\infty \ge \frac{1}{n}$. Obviously,
            \begin{align*}
                \frac{\|x\|_2^2}{\|x\|_\infty} & \ge \frac{t^2 + (n - 1) \left(\frac{1 - t}{n - 1}\right)^2}{t} \\
                & = \frac{n}{n - 1} t - \frac{2}{n - 1} + \frac{1}{t(n - 1)} \\
                & \ge \frac{2}{\sqrt{n} + 1}.
            \end{align*}

        \end{proof}
        
            We are now ready to state our main result of the section.
        \begin{theorem}
            For any standard two-player game with payoff matrices $A$ and $B$, the simple approximation scheme yields a multiplicative $O\left(\sqrt{\max(m, n)}\right)$-approximate MSNE, where $m$ is the number of rows of $A$, and $n$ is the number of rows of $B$.
        \end{theorem}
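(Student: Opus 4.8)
The plan is to bound, for each player separately, the multiplicative factor by which that player can improve her utility by unilaterally deviating from $(x',y')$, and to reduce this factor to a norm ratio that Lemma~\ref{approx} controls. Throughout I assume $A,B>0$ (so that the associated PUSG is well defined and a multiplicative guarantee is meaningful), and I write $(x,y)$ for the unique NE of the PUSG $A\times B$, whose structure is supplied by Theorem~\ref{thm:4} and Theorem~\ref{thm:3.4}: $x>0$, $\|x\|_2=1$, $ABx=\lambda x$ with $\lambda=\rho(AB)>0$, $y=Bx/\|Bx\|_2>0$, $\|y\|_2=1$, and $Ay=\frac{\lambda}{\|Bx\|_2}x$. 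Set $x'=x/\|x\|_1$ and $y'=y/\|y\|_1$; since $x,y>0$ these are genuine mixed strategies, so the simple approximation scheme is well defined.

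First I would evaluate the two relevant quantities for player 1. Using $x^TAy=\frac{\lambda}{\|Bx\|_2}\|x\|_2^2=\frac{\lambda}{\|Bx\|_2}$, the payoff at the profile is
$$
u_1(x',y')=x'^TAy'=\frac{x^TAy}{\|x\|_1\,\|y\|_1}=\frac{\lambda}{\|x\|_1\,\|y\|_1\,\|Bx\|_2}.
$$
Because $A>0$ and $y'\ge 0$, $y'\neq 0$, player 1's best response to $y'$ is a pure strategy, of value $\|Ay'\|_\infty=\frac{1}{\|y\|_1}\|Ay\|_\infty=\frac{\lambda}{\|y\|_1\,\|Bx\|_2}\|x\|_\infty$, where the last step uses that $Ay$ is a positive multiple of $x>0$. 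Dividing, the best deviation improves player 1 by exactly the factor $\|x\|_1\|x\|_\infty$. An entirely symmetric computation, this time using $Bx=\|Bx\|_2\,y$ and $\|y\|_2=1$, shows player 2's best deviation improves her by exactly the factor $\|y\|_1\|y\|_\infty$.

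It remains to bound these two scalars. Applying Lemma~\ref{approx} to $x'$ (which has unit $\ell_1$-norm) gives $\|x'\|_2^2/\|x'\|_\infty\ge 2/(\sqrt{m}+1)$; substituting $\|x'\|_2^2=\|x\|_1^{-2}$ and $\|x'\|_\infty=\|x\|_\infty\|x\|_1^{-1}$ rearranges this to $\|x\|_1\|x\|_\infty\le\frac{\sqrt{m}+1}{2}$, and likewise $\|y\|_1\|y\|_\infty\le\frac{\sqrt{n}+1}{2}$. Hence neither player can improve her utility by more than a factor $\frac{1}{2}\big(\sqrt{\max(m,n)}+1\big)=O(\sqrt{\max(m,n)})$, so $(x',y')$ is a multiplicative $O(\sqrt{\max(m,n)})$-approximate MSNE, as claimed.

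I do not expect a genuine obstacle here. The one structural observation that makes everything work is that positivity of $A$ and $B$ forces every best response in the underlying standard game to be a pure strategy, so the deviation gain collapses to the single scalar $\|x\|_1\|x\|_\infty$; after that, Lemma~\ref{approx} finishes the argument. The only things requiring care are the bookkeeping of the normalizing constants $\|x\|_1,\|y\|_1,\|Bx\|_2$ (which cancel cleanly in the ratio) and keeping the positivity hypothesis explicit, since without it a multiplicative approximation is not even well posed.
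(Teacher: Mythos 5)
Your proof is correct and follows essentially the same route as the paper's: both reduce player 1's deviation gain to the ratio $\|x'\|_\infty/\|x'\|_2^2$ (equivalently your $\|x\|_1\|x\|_\infty$ with $\|x\|_2=1$) using the fact that $Ay'$ is a positive multiple of $x'$, and then invoke Lemma~\ref{approx}. The only difference is that you carry the normalizing constants explicitly and state the positivity hypothesis that the paper leaves implicit, which is a minor presentational improvement rather than a different argument.
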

        \begin{proof}
            Let $(x, y)$ be the NE of the induced PUSG over payoff matrices $A\times B$, and $(x', y')$ be the normalized vectors, as stated in the simple scheme. Since $(x, y)$ is an NE in the PUSG, $\exists \lambda$, $\mu$, s.t.\
            $$
                Ay' = \lambda x',\ Bx' = \mu y'.
            $$
            Consider player one's payoff with or without deviation. 
            
            Without deviation, she gets
            $$
                u_1(x', y') = x'^T A y' = \lambda \|x'\|_2^2.
            $$
            
            By deviation, she gets
            \begin{align*}
                & \max_{\|x_1\|_1 = 1} u_1(x_1, y') = \max_{\|x_1\|_1 = 1} x_1^T A y' \\
                = &\ \lambda \max_{\|x_1\|_1 = 1} x_1^T x' = \lambda \|x'\|_\infty.
            \end{align*}
            By Lemma~\ref{approx},
            \begin{align*}
                & \min_{(x',y')} \frac{u_1(x', y')}{\max_{\|x_1\|_1 = 1} u_1(x_1, y')} \\
                = &\ \min_{(x',y')} \frac{\|x'\|_2^2}{\|x'\|_\infty} \ge \frac{1}{\sqrt{m}} + \frac{m - 2\sqrt{m} + 1}{\sqrt{m}(m - 1)} \\
                = &\ \Omega\left(\frac{1}{\sqrt{m}}\right).
            \end{align*}
            Symmetrically, for player two, the approximate factor becomes $\Omega\left(\frac{1}{\sqrt{n}}\right)$.
        \end{proof}

\section{Multiplayer PUSGs}
\label{sec:n-player}

    \begin{definition}
        An $m$-player PUSG is defined as $(A^1, \dots, A^m)$, where $A^k = \left(A_{i_1, \dots, i_m}^k\right)$ is the game tensor for player $k$, such that $A_{i_1, \dots, i_m}^k > 0$ for all $i_1 \in [n_1], \dots, i_m \in [n_m]$.
    \end{definition}

\subsection{Existence of NE in multiplayer PUSGs}

    \begin{lemma}
    \label{Brouwer}
        (Brouwer's fixed point theorem): For any $n \in \mathbb{Z}^+$, $\Omega \subseteq \mathbb{R}^n$ which is compact and convex, $f: \Omega \rightarrow \Omega$ which is continuous, there is some $x^* \in \Omega$ such that $f(x^*) = x^*$.
    \end{lemma}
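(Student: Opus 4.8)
\section{Proof proposal for Lemma~\ref{Brouwer}}

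Since Brouwer's theorem is classical, the plan is to sketch the standard self-contained proof via \emph{Sperner's lemma}, which keeps everything elementary and combinatorial. First I would reduce to the case where $\Omega$ is the standard $n$-simplex $\Delta^n = \{x \in \mathbb{R}^{n+1} : x_i \ge 0,\ \sum_i x_i = 1\}$. A nonempty compact convex $\Omega \subseteq \mathbb{R}^n$ is homeomorphic to a closed ball $\bar B^k$ of some dimension $k \le n$ (or is a single point, in which case the claim is trivial), and $\bar B^k$ is in turn homeomorphic to $\Delta^k$; since the fixed-point property is preserved under homeomorphism (if $h : \Omega \to \Delta^k$ is a homeomorphism and $g = h f h^{-1}$ has a fixed point $z$, then $h^{-1}(z)$ is a fixed point of $f$), it suffices to prove the statement for continuous $f : \Delta^n \to \Delta^n$.

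Next I would state and prove Sperner's lemma: given a simplicial subdivision (triangulation) of $\Delta^n$ and a labeling $\ell$ of its vertices by $\{0, 1, \dots, n\}$ that is \emph{proper} --- every vertex lying on the face spanned by $\{e_i : i \in T\}$ receives a label in $T$ --- there is an odd number of \emph{rainbow} sub-simplices whose $n+1$ vertices carry all $n+1$ labels, hence at least one. This is proved by induction on $n$ via a parity (``door-counting'') argument: count incident pairs consisting of an $n$-simplex together with one of its $(n-1)$-faces that is labeled $\{0, \dots, n-1\}$; each internal such face lies in exactly two simplices and each boundary one in exactly one, and by properness the boundary ones all lie on the facet spanned by $\{e_0, \dots, e_{n-1}\}$, where the induction hypothesis counts them.

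Now, given continuous $f : \Delta^n \to \Delta^n$, suppose for contradiction that $f$ has no fixed point. Take triangulations $\mathcal{T}_m$ of $\Delta^n$ with mesh tending to $0$, and label each vertex $v$ by some index $i$ with $v_i > 0$ and $f(v)_i \le v_i$. Such an $i$ exists: if $f(v)_j \ge v_j$ for all $j$, then equal coordinate sums force $f(v) = v$, a contradiction, so some coordinate strictly decreases, $f(v)_j < v_j$, which also forces $v_j > 0$. The labeling is proper, because the chosen label $i$ has $v_i > 0$ and hence lies in the index set of any face containing $v$. By Sperner's lemma each $\mathcal{T}_m$ contains a rainbow simplex with vertices $v^{m,0}, \dots, v^{m,n}$ satisfying $f(v^{m,k})_k \le v^{m,k}_k$; passing to a subsequence so that all $v^{m,k}$ converge to a common limit $x^*$ (possible since the diameters shrink and $\Delta^n$ is compact), continuity gives $f(x^*)_k \le x^*_k$ for every $k$, and since $\sum_k f(x^*)_k = 1 = \sum_k x^*_k$ these are all equalities, i.e.\ $f(x^*) = x^*$ --- contradicting the assumption. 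Hence $f$ has a fixed point.

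The main obstacle is the combinatorial core, Sperner's lemma: setting up the door-counting parity argument cleanly and verifying that boundary doors are confined to a single facet so the induction closes. The homeomorphism reduction, though standard, also rests on the (routine but not one-line) fact that every nonempty compact convex body is homeomorphic to a closed ball of the appropriate dimension, including the degenerate lower-dimensional cases. Once these are in place, the limiting argument at the end is immediate.
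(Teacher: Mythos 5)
The paper does not prove this lemma at all: it is stated as classical background (Brouwer's fixed point theorem) and used as a black box in the existence proof for multiplayer PUSGs, so there is no in-paper argument to compare against. Your sketch supplies a genuine, self-contained proof along the standard Sperner route, and it is correct: the reduction of a (nonempty) compact convex set to a simplex via homeomorphism, the parity proof of Sperner's lemma, the choice of label $i$ with $v_i > 0$ and $f(v)_i \le v_i$ (with the right justification for why such an $i$ exists under the no-fixed-point hypothesis and why the labeling is proper), and the final limiting argument using equal coordinate sums are all handled soundly. Two small remarks: the lemma as stated in the paper omits the hypothesis that $\Omega$ be nonempty, which you correctly and silently add (the statement is false for $\Omega = \emptyset$); and for the purposes of this paper a one-line citation, as the authors give, is the appropriate level of detail --- your proof is more work than the context demands, but it buys self-containedness and an effective (combinatorial) core, which is in the spirit of the computational questions the paper cares about.
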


    \begin{theorem}
        There exists an NE for any $m$-player PUSG $(A^1, \dots, A^m)$.
    \end{theorem}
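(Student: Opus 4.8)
The plan is to build a continuous self-map on a compact convex set whose fixed points are exactly the Nash equilibria, and then invoke Brouwer's theorem (Lemma~\ref{Brouwer}). The strategy set of player $k$ is $S_k = \{x \in \mathbb{R}^{n_k} : x \ge 0,\ \|x\|_2 = 1\}$, which is compact but \emph{not} convex, so one cannot apply Brouwer to the best-response map directly on $\prod_k S_k$. The key device is coordinatewise squaring: $S_k$ is homeomorphic to the standard simplex $\Delta_k = \{p \in \mathbb{R}^{n_k} : p \ge 0,\ \sum_i p_i = 1\}$ via $\psi_k : \Delta_k \to S_k$, $\psi_k(p)_i = \sqrt{p_i}$, whose inverse $x \mapsto (x_i^2)_i$ is continuous (a continuous bijection from a compact space to a Hausdorff space). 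This is exactly the ``square-root of allocated resource'' correspondence from the introduction: $p$ is the allocation, $x$ is its square root.

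First I would record the form of a best response. Writing $x^{-k}$ for the tuple of the other players' strategies, the payoff of player $k$ is linear in her own strategy, $u_k(x^k, x^{-k}) = \langle c^k(x^{-k}), x^k\rangle$, where $c^k(x^{-k})$ is the vector whose $i$-th entry is $\sum A^k_{i_1,\dots,i_m}\prod_{j \ne k} x^j_{i_j}$, the sum ranging over all indices $i_j$ with $j \ne k$ and with $i_k$ fixed equal to $i$. Since $A^k > 0$ and each $x^j$ ($j \ne k$) is a nonzero non-negative vector, every entry of $c^k(x^{-k})$ is strictly positive; in particular $c^k(x^{-k}) \ne 0$ throughout the domain. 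By Cauchy--Schwarz, $\langle c^k, x^k\rangle \le \|c^k\|_2$ for $x^k \in S_k$, with equality only at $x^k = c^k/\|c^k\|_2 \in S_k$; hence the best response $\mathrm{BR}_k(x^{-k}) = c^k(x^{-k})/\|c^k(x^{-k})\|_2$ is single-valued, and — because $c^k$ is a polynomial map and $\|c^k\|_2$ is bounded away from $0$ on the compact set $\prod_{j\ne k} S_j$ — continuous. Therefore $\Phi := (\mathrm{BR}_1,\dots,\mathrm{BR}_m) : \prod_k S_k \to \prod_k S_k$ is continuous, and its fixed points are precisely the NE of the PUSG.

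Next I would transport $\Phi$ to the product of simplices. Let $\psi = \prod_k \psi_k : \prod_k \Delta_k \to \prod_k S_k$ and define $g := \psi^{-1} \circ \Phi \circ \psi : \prod_k \Delta_k \to \prod_k \Delta_k$. The domain $\prod_k \Delta_k$ is compact and convex, and $g$ is continuous as a composition of continuous maps, so Lemma~\ref{Brouwer} yields a point $p^*$ with $g(p^*) = p^*$. Applying the injective map $\psi$ to both sides gives $\Phi(\psi(p^*)) = \psi(p^*)$, so $x^* := \psi(p^*)$ is a fixed point of $\Phi$, i.e.\ each player is playing a best response against $x^{*-k}$, i.e.\ $x^*$ is an NE.

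The main obstacle is the non-convexity of the unit-sphere strategy set, which blocks the textbook Nash/Kakutani argument; the squaring homeomorphism onto the simplex is what circumvents it. A secondary point worth checking carefully is that the best-response map is genuinely single-valued and continuous \emph{everywhere} on the domain — this is precisely where positivity of the tensors $A^k$ is used (it keeps $c^k(x^{-k})$ strictly positive, hence nonzero, so the normalization never degenerates), and it is also what lets us bypass set-valued fixed-point theorems entirely.
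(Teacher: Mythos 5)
Your proposal is correct and follows essentially the same route as the paper: both circumvent the non-convexity of the unit sphere by applying Brouwer's theorem (Lemma~\ref{Brouwer}) to a continuous self-map of the compact convex product of simplices whose fixed points correspond to profiles with $A^k x_{-k} \propto x_k$, hence to Nash equilibria after rescaling. The only difference is cosmetic --- the paper $L_1$-normalizes the image vectors and rescales to the $L_2$-sphere at the end, whereas you conjugate the genuine $L_2$ best-response map by the coordinatewise squaring homeomorphism, which makes the ``fixed points are exactly NE'' step slightly more self-contained.
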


        The proof resembles that of the existence of MSNE in normal form games.

    \begin{proof}
        Let
        $$
            s_i = \sum_{j \le i} n_j,
        $$
        $$
            \mathbb{R}^{s_m} \supset \Omega = \left\{x \in \mathbb{R}^{s_m} \Big| x_{i, j} \ge 0,\,\forall i \in [m],  j \in [n_i],\, \|x_i\|_1 = 1 \,\forall i \in [m]\right\}.
        $$
        For all $x = (x_1, \dots, x_m) \in \Omega$,
        $$
            f(x) = \left(\frac{A^1 x_2 x_3 \dots x_m}{\|A^1 x_2 x_3 \dots x_m\|_1}, \frac{A^2 x_1 x_3 \dots x_m}{\|A^2 x_1 x_3 \dots x_m\|_1}, \dots, \frac{A^m x_1 x_2 \dots x_{m - 1}}{\|A^m x_1 x_2 \dots x_{m - 1}\|_1}\right).
        $$
        It is easy to verify that $\Omega$ and $f$ satisfy the conditions in Lemma~\ref{Brouwer}. Therefore there is some $x^* = (x_1^*, \dots, x_m^*)$ satisfying $f(x^*) = x^*$, which implies that there is some $\lambda_i$ such that $A^i x_1^* \dots x_m^* = \lambda_i x_i^*$ for all $i$. So $\left(\frac{x_1^*}{\|x_1^*\|_2}, \dots, \frac{x_m^*}{\|x_m^*\|_2}\right)$ is an MSNE of the PUSG.
    \end{proof}
    
    Note that, this theorem also follows from~\cite[Theorem 1]{Fiat10}.

\subsection{Subclasses of multiplayer PUSGs}

In this subsection, we investigate several subclasses of multiplayer PUSGs where NE is easy to find.

    \subsubsection{Symmetric PUSGs with even number of players}

        First we give an algorithm that solves $m$-player symmetric PUSGs when $m$ is even.

        \begin{definition}
            An $m$-player symmetric PUSG is a PUSG where $A^i = A^j$ for all $i, j \in [m]$, and $A_{i_1, \dots, i_m}^k = A_{\sigma(i_1), \dots, \sigma(i_m)}^k$ for all $k \in [m]$ and $\sigma \in S_n$, where $S_n$ is the permutation group and $n$ is the number of actions of each player.
        \end{definition}
        
        The method used to find NE is symmetric PUSG is called {\em SS-HOPM}. SS-HOPM outputs a symmetric NE with a particular payoff (which equals the largest Z-eigenvalue of the payoff tensor). The linear convergence of SS-HOPM has been shown originally in \cite{KM11} and revised in \cite{CPZ13}.

        The algorithm performs as the following.
        \begin{enumerate}
            \item Choose $x^0 > 0$, and the shift constant $\alpha = \lceil m \sum_{i_1, \dots, i_m} A_{i_1, \dots, i_m} \rceil$.
            \item Let $y^{t + 1} = A(x^t)^{m - 1} + \alpha x^t$.
            \item Compute
                $$
                    x^{t + 1} = \frac{y^{t + 1}}{\|y^{t + 1}\|_2},\ \lambda^{t + 1} = A(x^{t + 1})^m.
                $$
        \end{enumerate}

        As shown in \cite{CPZ13}, $x^t$ converges to an symmetric PSNE $x^*$ while $\lambda^t$ converges to the payoff of each player playing $x^*$.

    \subsubsection{Markov PUSGs via Cournot Adjustments}

        Generalizing \cite{LN14}, we show that a unique PSNE exists in any {\em Markov PUSG}, which can be efficiently achieved via Cournot adjustments.

        \begin{definition}
            An Markov PUSG $(A^1, \dots, A^m)$ is a PUSG such that
            $$
                \sum_{i_k} A_{i_1, \dots, i_m}^k = c_k
            $$
            for all $k \in [m], i_1 \in [n_1], \dots, i_{k - 1} \in [n_{k - 1}], i_{k + 1} \in [n_{k + 1}], \dots, i_m \in [n_m]$ and a constant $c_k$.
        \end{definition}

In words, Markov PUSG is a subset of PUSG such that, fixing any other players' strategy profile, the sum of player $k$'s utility over all his/her actions is a constant, for any $k$.
        Since every Markov PUSG can be scaled such that for all $k$, $c_k = 1$, it is without loss of generality to consider only those games.

        \begin{lemma}
        \label{conservation}
            For nonnegative $x_1, \dots, x_m$ such that $\|x_i\|_1 = 1$, we have $\|A^k x_1 \dots x_{k - 1} x_{k + 1} \dots x_{n}\|_1 = 1$ for all $k \in [m]$.
        \end{lemma}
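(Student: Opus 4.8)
The plan is to expand the tensor contraction coordinatewise, use nonnegativity to turn the $\ell_1$-norm into an ordinary sum, interchange the (finite) sums, and then apply the Markov normalization together with $\|x_j\|_1 = 1$.

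First I would fix $k \in [m]$, set $v = A^k x_1 \cdots x_{k-1} x_{k+1} \cdots x_m$, and write its $i_k$-th coordinate as
$$
v_{i_k} = \sum_{i_1, \ldots, i_{k-1}, i_{k+1}, \ldots, i_m} A^k_{i_1, \ldots, i_m} \prod_{j \neq k} (x_j)_{i_j}.
$$
Since $A^k > 0$ and each $x_j \geq 0$, every summand is nonnegative, so $v_{i_k} \geq 0$ and hence $\|v\|_1 = \sum_{i_k} v_{i_k}$. Summing the display over $i_k$ and pulling the $i_k$-sum innermost (legitimate, as all index sets are finite) gives
$$
\|v\|_1 = \sum_{i_1, \ldots, i_{k-1}, i_{k+1}, \ldots, i_m} \left( \prod_{j \neq k} (x_j)_{i_j} \right) \left( \sum_{i_k} A^k_{i_1, \ldots, i_m} \right).
$$

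Next I would invoke the (normalized) Markov assumption, $\sum_{i_k} A^k_{i_1, \ldots, i_m} = c_k = 1$ for every fixed choice of the remaining indices, which collapses the inner sum to $1$, leaving
$$
\|v\|_1 = \sum_{i_1, \ldots, i_{k-1}, i_{k+1}, \ldots, i_m} \prod_{j \neq k} (x_j)_{i_j} = \prod_{j \neq k} \left( \sum_{i_j \in [n_j]} (x_j)_{i_j} \right) = \prod_{j \neq k} \|x_j\|_1 = 1,
$$
using the elementary factorization of a sum of products over a product index set and then $\|x_j\|_1 = 1$ for each $j$.

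There is no genuine obstacle here: the computation is essentially a one-line Fubini argument once the notation is unfolded. The only points requiring care are recording that nonnegativity of $A^k$ and of the $x_j$ is exactly what lets one drop the absolute values in $\|v\|_1$, and writing the tensor-contraction indices precisely so that the sum interchange and the product factorization are stated correctly. Without the scaling $c_k = 1$ the same steps would yield $\|v\|_1 = c_k$; since the excerpt has already reduced to $c_k = 1$ without loss of generality, the stated form is immediate.
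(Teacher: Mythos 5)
Your proposal is correct and follows essentially the same route as the paper's proof: expand the contraction coordinatewise, interchange the finite sums, apply the normalization $\sum_{i_k} A^k_{i_1,\dots,i_m} = 1$, and factor the remaining sum of products into $\prod_{j \ne k}\|x_j\|_1 = 1$. Your explicit remark that nonnegativity of $A^k$ and the $x_j$ is what lets one replace $\|v\|_1$ by $\sum_{i_k} v_{i_k}$ is a detail the paper leaves implicit, but the argument is otherwise identical.
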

        \begin{proof}
            \begin{align*}
                & \|A^k x_1 \dots x_{k - 1} x_{k + 1} \dots x_{n}\|_1 \\
                =\ & \sum_{i_k} \sum_{i_1, \dots, i_{k - 1}, i_{k + 1}, \dots, i_m} A_{i_1, \dots, i_m}^k x_{1, i_1} \dots x_{k - 1, i_{k - 1}} x_{k + 1, i_{k + 1}} \dots x_{m, i_m} \\
                =\ & \sum_{i_1, \dots, i_{k - 1}, i_{k + 1}, \dots, i_m} \sum_{i_k} A_{i_1, \dots, i_m}^k x_{1, i_1} \dots x_{k - 1, i_{k - 1}} x_{k + 1, i_{k + 1}} \dots x_{m, i_m} \\
                =\ & \sum_{i_1, \dots, i_{k - 1}, i_{k + 1}, \dots, i_m} x_{1, i_1} \dots x_{k - 1, i_{k - 1}} x_{k + 1, i_{k + 1}} \dots x_{m, i_m} \\
                =\ & 1
            \end{align*}
        \end{proof}

        \begin{lemma}
        \label{m_convergence}
            Let $\Omega$ be as defined above, $f: \Omega \rightarrow \Omega$ be such that for $v \in \Omega$,
            $$
                f(v)_k = A^k v_1 \dots v_{k - 1} v_{k + 1} \dots v_m.
            $$
            For $x = (x_1, \dots, x_m) \in \Omega$, $y = (y_1, \dots, y_m) \in \Omega$,
            $$
                \|f(x)_k - f(y)_k\|_1 \le (1 - \delta_k) \left(\sum_{i \in [m],\, i \ne k} \|x_i - y_i\|_1 \right),
            $$
            where
            $$
                \delta_k = \min_{V \subseteq [n_k]} \left[\min_{i_1, \dots, i_{k - 1}, i_{k + 1}, \dots, i_m} \sum_{i_k \in V} A_{i_1, \dots, i_m}^k + \min_{i_1, \dots, i_{k - 1}, i_{k + 1}, \dots, i_m} \sum_{i_k \in V'} A_{i_1, \dots, i_m}^k\right],
            $$
            and $V' = [n_k] \setminus V$.
        \end{lemma}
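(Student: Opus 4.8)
The plan is to combine a hybrid (telescoping) argument over the blocks $i\ne k$ with a single-block Dobrushin-type contraction estimate. First I would reduce to the case where $x$ and $y$ differ in exactly one block: enumerate $[m]\setminus\{k\}=\{j_1,\dots,j_{m-1}\}$ and build hybrids $z^{(0)}=x,z^{(1)},\dots,z^{(m-1)}$, where $z^{(t)}$ carries $y_{j_s}$ in block $j_s$ for $s\le t$ and $x_{j_s}$ for $s>t$ (its $k$-th block is irrelevant, as $f(\cdot)_k$ ignores block $k$). Then $f(z^{(0)})_k=f(x)_k$, $f(z^{(m-1)})_k=f(y)_k$, and consecutive hybrids differ only in block $j_t$, so by the triangle inequality it suffices to prove
$$\|f(u)_k-f(w)_k\|_1\le(1-\delta_k)\,\|u_j-w_j\|_1$$
whenever $u,w\in\Omega$ agree outside block $j\ne k$; summing this over $t=1,\dots,m-1$ gives the claimed bound.

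For the single-block estimate, fix all blocks other than $j$ and $k$ at their common values and contract $A^k$ against them, producing a matrix $N\in\mathbb{R}^{n_k\times n_j}$ with $f(u)_k-f(w)_k=N(u_j-w_j)$. Two structural facts drive the argument. First, $N$ is column-stochastic: its $\ell$-th column is $A^k$ contracted against $e_\ell$ in block $j$ and against the fixed blocks, which is nonnegative and of $\ell_1$-norm $1$ by Lemma~\ref{conservation}. Second, $d:=u_j-w_j$ has zero coordinate sum since $u_j,w_j$ lie in the simplex. For a column-stochastic $N$ and a zero-sum $d$ there is the classical contraction $\|Nd\|_1\le\tau_1(N)\,\|d\|_1$ with $\tau_1(N)=1-\min_{\ell,\ell'}\sum_i\min(N_{i\ell},N_{i\ell'})$; I would include the two-line derivation (write $d=d^+-d^-$ with $\|d^+\|_1=\|d^-\|_1=\tfrac12\|d\|_1$, note $Nd^+/\|d^+\|_1$ and $Nd^-/\|d^-\|_1$ are convex combinations of columns of $N$, and expand $Na-Nb=\sum_{\ell,\ell'}a_\ell b_{\ell'}(Ne_\ell-Ne_{\ell'})$).

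It then remains to show $\tau_1(N)\le1-\delta_k$, i.e.\ $\min_{\ell,\ell'}\sum_i\min(N_{i\ell},N_{i\ell'})\ge\delta_k$. Using the elementary identity $\sum_i\min(a_i,b_i)=\min_{V\subseteq[n_k]}\bigl(\sum_{i\in V}a_i+\sum_{i\in V'}b_i\bigr)$ with $a=Ne_\ell$, $b=Ne_{\ell'}$ and swapping the minima over $(\ell,\ell')$ and $V$, the left side becomes $\min_V\bigl(\min_\ell\sum_{i\in V}N_{i\ell}+\min_{\ell'}\sum_{i\in V'}N_{i\ell'}\bigr)$. Finally $\min_\ell\sum_{i\in V}N_{i\ell}\ge\min_{i_1,\dots,i_{k-1},i_{k+1},\dots,i_m}\sum_{i_k\in V}A^k_{i_1,\dots,i_m}$, because for each fixed $\ell$ the quantity $\sum_{i\in V}N_{i\ell}$ is a convex combination (weights $\prod_{p\ne j,k}x_{p,i_p}$ that sum to $1$ over the fixed blocks) of the numbers $\sum_{i_k\in V}A^k$ with $i_j=\ell$, hence is at least the global minimum of $\sum_{i_k\in V}A^k$ over all indices other than $i_k$; the same bound holds for $V'$, and taking $\min_V$ recovers exactly $\delta_k$.

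I expect the bookkeeping in the last paragraph to be the main obstacle: one must be careful that the minimum defining $\delta_k$ runs over all indices \emph{except} $i_k$ (so that, after pinning $i_j=\ell$, it still legitimately lower-bounds the per-column sums), and that the column-stochasticity of $N$ is precisely the instance of Lemma~\ref{conservation} in which one block is a standard basis vector. The telescoping and the Dobrushin contraction are routine once these points are in place.
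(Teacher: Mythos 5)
Your proposal is correct, and it proves exactly the stated bound. It rests on the same underlying mechanism as the paper's proof --- a telescoping decomposition over the blocks $i\ne k$ followed by a ``max minus min of column sums over a subset $V$'' estimate that is converted into $1-\delta_k$ via the Markov normalization of Lemma~\ref{conservation} --- but you package it much more modularly. The paper carries out everything in a single coordinate-level computation: it expands $f(x)_k-f(y)_k$ term by term, fixes the specific sets $V_k=\{i_k: f(x)_{k,i_k}>f(y)_{k,i_k}\}$ and $V_1=\{i_1: x_{1,i_1}>y_{1,i_1}\}$ determined by sign patterns, bounds the resulting expression by $(\max-\min)\cdot\frac12\|x_1-y_1\|_1$, and only at the end invokes $\|f(x)_k-f(y)_k\|_1=2\sum_{i_k\in V_k}(\cdots)$ and the fact that $\delta_k$ is a minimum over all $V$. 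You instead isolate the single-block case as a clean statement about a column-stochastic matrix $N$ acting on a zero-sum vector, invoke the classical Dobrushin contraction $\|Nd\|_1\le\tau_1(N)\|d\|_1$, and bound $\tau_1(N)\le 1-\delta_k$ via the identity $\sum_i\min(a_i,b_i)=\min_V\bigl(\sum_{i\in V}a_i+\sum_{i\in V'}b_i\bigr)$ together with the observation that each column sum $\sum_{i_k\in V}N_{i_k\ell}$ is a convex combination of the quantities $\sum_{i_k\in V}A^k_{i_1,\dots,i_m}$. What your route buys is verifiability: the column-stochasticity of $N$ is exactly Lemma~\ref{conservation} applied with one block equal to a standard basis vector, the zero-sum property of $u_j-w_j$ is immediate, and the swap of minima over $(\ell,\ell')$ and $V$ is legitimate because the two terms decouple; the one point you flag as delicate --- that the minimum defining $\delta_k$ ranges over all indices except $i_k$, hence still lower-bounds the per-column sums after pinning $i_j=\ell$ --- is handled correctly. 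The paper's version is shorter on paper but considerably harder to audit.
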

        \begin{proof}
            \begin{align*}
                & \sum_{i_k \in V_k} (f(x)_{k, i_k} - f(y)_{k, i_k}) \\
                =\ & \sum_{i_k \in V_k} \sum_{i_1, \dots, i_{k - 1}, i_{k + 1}, \dots, i_m} A_{i_1, \dots, i_m} (x_{1, i_1} \dots, x_{k - 1, i_{k - 1}}, x_{k + 1, i_{k + 1}}, \dots, x_{m, i_m} - \\
                & y_{1, i_1} \dots, y_{k - 1, i_{k - 1}}, y_{k + 1, i_{k + 1}}, \dots, y_{m, i_m}) \\
                =\ & \sum_{i_k \in V_k} \sum_{i_1, \dots, i_{k - 1}, i_{k + 1}, \dots, i_m} A_{i_1, \dots, i_m} [(x_{1, i_1} - y_{1, i_1})x_{2, y_2} \dots, x_{m, i_m} + \\
                & y_{1, i_1}(x_{2, i_2} - y_{2, i_2})x_{3, i_3} \dots x_{m, i_m} + \dots + y_{1, i_1} \dots y_{m - 1, i_{m - 1}} (x_{m, i_m} - y_{m, i_m})].
            \end{align*}
            Let $V_k \subseteq [n_k]$ be the largest set such that $\forall i_k \in V_k$, $f(x)_{k, i_k} > f(y)_{k, i_k}$, $V_1 \subseteq [n_1]$ the largest set such that $\forall i_1 \in V_1$, $x_{1, i_1} > y_{1, i_1}$. Note that by Lemma~\ref{conservation}, $\|x_k\|_1 = \|y_k\|_1 = 1$, and hence $\sum_{i_k} x_{k, i_k} - y_{k, i_k} = 0$ for all $k \in [m]$. We then have
            \begin{align*}
                & \sum_{i_1, \dots, i_{k - 1}, i_{k + 1}, \dots, i_m} \sum_{i_k \in V_k} A_{i_1, \dots, i_m} (x_{1, i_1} - y_{1, i_1}) y_{2, i_2} \dots, y_{m, i_m} \\
                =\ & \sum_{i_1 \in V_1} \sum_{i_2, \dots, i_{k - 1}, i_{k + 1}, \dots, i_m} \sum_{i_k \in V_k} A_{i_1, \dots, i_m} (x_{1, i_1} - y_{1, i_1}) y_{2, i_2} \dots, y_{m, i_m} + \\
                & \sum_{i_1 \notin V_1} \sum_{i_2, \dots, i_{k - 1}, i_{k + 1}, \dots, i_m} \sum_{i_k \in V_k} A_{i_1, \dots, i_m} (x_{1, i_1} - y_{1, i_1}) y_{2, i_2} \dots, y_{m, i_m} \\
                \le\ & \sum_{i_1 \in V_1} \sum_{i_2, \dots, i_{k - 1}, i_{k + 1}, \dots, i_m} \left(\max_{j_1 \in V_1, j_2, \dots, j_m} \sum_{j_k \in V_k} A_{j_1, \dots, j_m}\right) (x_{1, i_1} - y_{1, i_1}) y_{2, i_2} \dots, y_{m, i_m} - \\
                & \sum_{i_1 \notin V_1} \sum_{i_2, \dots, i_{k - 1}, i_{k + 1}, \dots, i_m} \left(\min_{j_1 \in V_1, j_2, \dots, j_m} \sum_{j_k \in V_k} A_{j_1, \dots, j_m}\right) (y_{1, i_1} - x_{1, i_1}) y_{2, i_2} \dots, y_{m, i_m} \\
                =\ & \left(\max_{j_1 \in V_1, j_2, \dots, j_m} \sum_{j_k \in V_k} A_{j_1, \dots j_m} - \min_{j_1 \in V_1, j_2, \dots, j_m} \sum_{j_k \in V_k} A_{j_1, \dots, j_m}\right) \times \\
                & \sum_{i_1 \notin V_1} \sum_{i_2, \dots, i_{k - 1}, i_{k + 1}, \dots, i_m} (y_{1, i_1} - x_{1, i_1})y_{2, i_2} \dots y_{k - 1, i_{k - 1}} y_{k + 1, i_{k + 1}} \dots y_{m, i_m} \\
                \le\ & \left(1 - \min_{j_1, \dots, j_m} \sum_{j_k \notin V_k} A_{j_1, \dots, j_m} - \min_{j_1, \dots, j_m} \sum_{j_k \in V_k} A_{j_1, \dots, j_m}\right) \sum_{i_1 \notin V_1} (y_{1, i_1} - x_{1, i_1}) \\
                \le\ & \frac12 (1 - \delta_k) \|x_1 - y_1\|_1.
            \end{align*}
            We therefore get
            \begin{align*}
                \|f(x)_k - f(y)_k\|_1 & = 2 \sum_{i_k \in V_k} (f(x)_{k, i_k} - f(y)_{k, i_k}) \\
                & \le 2 \sum_{i \ne k} \left[\frac12 (1 - \delta_k) \|x_i - y_i\|_1\right] \\
                & = (1 - \delta_k) \left(\sum_{i \ne k} \|x_i - y_i\|_1\right).
            \end{align*}
        \end{proof}

        \begin{theorem}
        \label{m_uniqueness}
            There exists an unique NE in any Markov PUSG where $\delta_k > \frac{m - 2}{m - 1}$ for all $k$.
        \end{theorem}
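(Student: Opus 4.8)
The plan is to recognize the map $f$ of Lemma~\ref{m_convergence} as the (single-valued) best-response map of the game and to show that under the hypothesis $\delta_k > \frac{m-2}{m-1}$ it is a contraction on $\Omega$, so that the game has a unique NE by the contraction mapping principle.

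First I would set up the dictionary between NEs and fixed points of $f$. Fix a strategy profile $(z_1,\dots,z_m)$ with $z_k \ge 0$ and $\|z_k\|_2 = 1$. Because $A^k > 0$ and no $z_i$ vanishes, the vector $w_k := A^k z_1\cdots z_{k-1} z_{k+1}\cdots z_m$ is strictly positive, so $(z')^{T} w_k \le \|z'\|_2\|w_k\|_2$ with equality over unit vectors only at $z' = w_k/\|w_k\|_2$, which already lies in the nonnegative orthant; hence $(z_1,\dots,z_m)$ is an NE iff $z_k$ is proportional to $A^k z_1\cdots z_{k-1} z_{k+1}\cdots z_m$ for every $k$, and any such $z_k$ is strictly positive. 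Passing to $L_1$-normalized vectors $x_k := z_k/\|z_k\|_1 \in \Omega$ and using multilinearity of $A^k$ together with Lemma~\ref{conservation} (which forces the proportionality constants to equal $1$) gives $f(x) = x$; conversely, any fixed point $x \in \Omega$ of $f$ is strictly positive, and $(x_1/\|x_1\|_2,\dots,x_m/\|x_m\|_2)$ is an NE. These two constructions are mutually inverse, so NEs of the Markov PUSG correspond bijectively to fixed points of $f$ in $\Omega$.

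Next I would verify the contraction. Metrize $\Omega$ by $D(x,y) = \sum_{k=1}^{m}\|x_k - y_k\|_1$; as a closed and bounded subset of a finite-dimensional Euclidean space, $\Omega$ is a complete metric space. Summing the inequality of Lemma~\ref{m_convergence} over $k$ and switching the order of summation,
\[
    D(f(x),f(y)) \le \sum_{k=1}^{m}(1-\delta_k)\sum_{i \ne k}\|x_i - y_i\|_1 = \sum_{i=1}^{m}\|x_i - y_i\|_1 \sum_{k \ne i}(1-\delta_k) \le c\,D(x,y),
\]
where $c := \max_{i \in [m]}\sum_{k \ne i}(1-\delta_k)$. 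Since $\delta_k > \frac{m-2}{m-1}$ gives $1-\delta_k < \frac{1}{m-1}$ and each inner sum ranges over exactly $m-1$ indices, $c < 1$. Thus $f$ is a contraction, so by the Banach fixed-point theorem it has a unique fixed point in $\Omega$; by the bijection above the game has a unique NE. As a bonus, iterating $f$ from any positive start — i.e.\ running Cournot adjustment — converges to it geometrically, giving the efficient computation promised in the section title, and this argument supersedes the Brouwer-based existence proof in this case.

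I expect the main obstacle to be the bookkeeping in the first step rather than any hard estimate: one must check carefully that a PUSG best response is exactly the normalized multilinear product $A^k z_1\cdots z_{k-1} z_{k+1}\cdots z_m$ (using strict positivity of $A^k$ and of each opponent's strategy at an NE), and that the $L_1$-normalization baked into the definition of $f$ is consistent with Lemma~\ref{conservation}, so that a fixed point of $f$ really does encode an NE and vice versa. Once this dictionary is in place, combining Lemmas~\ref{conservation} and~\ref{m_convergence} with the arithmetic $1-\delta_k < \frac{1}{m-1}$ finishes the proof routinely.
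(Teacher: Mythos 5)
Your proof is correct and rests on the same key estimate as the paper's: summing Lemma~\ref{m_convergence} over $k$ and using $\delta_k > \frac{m-2}{m-1}$, i.e.\ $1-\delta_k < \frac{1}{m-1}$, so that the $m-1$ terms in each inner sum total strictly less than $1$. The paper derives the contradiction $\|x-y\|_1 < \|x-y\|_1$ directly from two hypothetical equilibria (leaving the NE-to-fixed-point dictionary implicit, which you spell out more carefully), while you package the identical inequality as a Banach contraction --- the same argument for uniqueness, though your framing yields existence and geometric convergence of Cournot adjustment as free byproducts rather than via the separate Brouwer and convergence theorems.
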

        \begin{proof}
            Assume there are two distinct NE in an $m$-player game $(A^1, \dots, A^m)$, $x_0$ and $y_0$. Let $x = \frac{x_0}{\|x_0\|_1}$, $y = \frac{y_0}{\|y_0\|_1}$. By Lemma~\ref{m_convergence},
            \begin{align*}
                & \|x - y\|_1 \\
                \ = & \sum_k \|x_k - y_k\|_1 \\
                \ = & \sum_k \|f(x)_k - f(y)_k\|_1 \\
                \ \le & \sum_k \sum_{i \ne k} (1 - \delta_k) \left(\|x_i - y_i\|_1\right) \\
                \ < & \sum_{i \in [m]} (m - 1) \left(1 - \frac{m - 2}{m - 1}\right) \left(\|x_i - y_i\|_1\right) \\
                \ = & \|x - y\|_1,
            \end{align*}
            an contradiction.
        \end{proof}

        \begin{theorem}
            Cournot adjustments lead to the unique NE in any Markov PUSG where $\delta_k > \frac{m - 2}{m - 1}$ for all $k$.
        \end{theorem}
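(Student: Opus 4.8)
The plan is to show that the Cournot trajectory, once renormalized to the $L_1$-sphere, is \emph{exactly} the orbit of the map $f$ from Lemma~\ref{m_convergence}, and that under the hypothesis $\delta_k > \frac{m-2}{m-1}$ this map is a $\|\cdot\|_1$-contraction; convergence to the unique NE then follows from the Banach fixed-point theorem together with Theorem~\ref{m_uniqueness}.

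First I would observe that the Cournot update $s_i^{t+1} = A^i s_{-i}^t / \|A^i s_{-i}^t\|_2$ depends on the opponents' strategies only through their directions: multiplying any $s_j^t$ ($j \ne i$) by a positive scalar merely rescales the tensor contraction $A^i s_{-i}^t$ by a positive scalar and leaves $s_i^{t+1}$ unchanged. Moreover all iterates are strictly positive (for $t=0$ by choice, and for $t \ge 1$ because $A^i > 0$ contracted against nonzero nonnegative vectors is strictly positive). Hence, writing $\hat s_i^t := s_i^t / \|s_i^t\|_1 \in \Omega$, we get $\hat s_i^{t+1} = A^i \hat s_{-i}^t / \|A^i \hat s_{-i}^t\|_1$, and since each $\hat s_j^t$ has unit $L_1$-norm, Lemma~\ref{conservation} gives $\|A^i \hat s_{-i}^t\|_1 = 1$, so in fact $\hat s_i^{t+1} = A^i \hat s_{-i}^t = f(\hat s^t)_i$. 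Thus $\hat s^{t+1} = f(\hat s^t)$, i.e., the renormalized Cournot trajectory is precisely the iteration of $f$ from $\hat s^0 \in \Omega$.

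Next I would establish that $f$ is a contraction on $(\Omega, \|\cdot\|_1)$ with $\|x - y\|_1 = \sum_k \|x_k - y_k\|_1$. Summing the bound of Lemma~\ref{m_convergence} over $k$ and noting each term $\|x_i - y_i\|_1$ occurs in exactly $m-1$ of the inner sums,
\[
\|f(x) - f(y)\|_1 \;\le\; \sum_k (1 - \delta_k)\sum_{i \ne k}\|x_i - y_i\|_1 \;\le\; (m-1)\left(1 - \min_k \delta_k\right)\|x - y\|_1,
\]
and the hypothesis $\delta_k > \frac{m-2}{m-1}$ forces the Lipschitz constant $(m-1)(1 - \min_k\delta_k) < 1$. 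Since $\Omega$ is compact (hence complete), the Banach fixed-point theorem yields that $\hat s^t = f^t(\hat s^0)$ converges geometrically to the unique fixed point $x^* \in \Omega$ of $f$ — this geometric rate is the ``linear convergence'' one expects.

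Finally I would translate back to the $L_2$-normalized strategies actually played: for $t \ge 1$, $s_i^t$ is the unit-$L_2$-norm positive rescaling of $\hat s_i^t$, so $s_i^t = \hat s_i^t / \|\hat s_i^t\|_2$, and by continuity of $\|\cdot\|_2$ together with $\hat s_i^t \to x_i^*$ we obtain $s_i^t \to x_i^* / \|x_i^*\|_2$. As in the proof of the existence theorem for multiplayer PUSGs, the coordinatewise $L_2$-normalization of a fixed point of $f$ is a Nash equilibrium, and by Theorem~\ref{m_uniqueness} it is the \emph{unique} one; hence $(s^t)$ converges to the unique NE. The main obstacle is the bookkeeping of the second paragraph: getting the $L_2$-normalized Cournot dynamics to coincide, after $L_1$-renormalization, with the \emph{unnormalized} map $f$ — this is exactly where Lemma~\ref{conservation} is indispensable. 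Once that identification is in hand, the rest is the contraction argument already implicit in Theorem~\ref{m_uniqueness}.
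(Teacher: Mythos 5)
Your proposal is correct and follows essentially the same route as the paper: both identify the ($L_1$-renormalized) Cournot trajectory with iteration of the map $f$ and then use the estimate of Lemma~\ref{m_convergence} with $\delta_k > \frac{m-2}{m-1}$ to get a contraction factor $(m-1)(1-\min_k\delta_k) < 1$, the paper via an explicit induction on $\epsilon_t = \max_k\|x_k^t - x_k^*\|_1$ and you via the Banach fixed-point theorem on the summed $L_1$-norm. The only substantive difference is that you spell out the bookkeeping between the $L_2$-normalized strategies actually played and the $L_1$-normalized orbit of $f$ (via Lemma~\ref{conservation} and homogeneity of the best response), a step the paper's proof asserts without detail.
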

        \begin{proof}
            For simplicity, we denote strategies by vectors whose $L_1$-norm are scaled to $1$ in the proof. Consider a procedure where player $k$ starts by playing $x_k^0 = (\frac{1}{n_k}, \dots, \frac{1}{n_k})$. Let $x^*$ be the unique PSNE of the game, guaranteed to exist by Theorem~\ref{m_uniqueness}. Let $\epsilon_0 = \max_{i \in [m]} \|x_i^0 - x_i^*\|_1$, $\delta = \max_{i \in [m]} (1 - \delta_i)$. Clearly, in round $t$, strategies of player $k$ will be $x_k^t = f(x^{t - 1})_k$, where $f$ is the same as stated above.

            On the other hand, as shown in Lemma~\ref{m_convergence},
            $$
                \|x_k^t - x_k^*\|_1 \le \delta \left(\sum_{i \ne k} \|x_i^{t - 1} - x_i^*\|_1\right),\,\forall t \in \mathbb{Z}^+,\,k \in [m].
            $$
            By a simple induction, we prove that
            $$
                \epsilon_t = \max_k \left[\|x_k^t - x_k^*\|_1\right] \le (m - 1)^t \delta^t \epsilon_0.
            $$
            When $t = 0$, it holds obviously that $\epsilon_0 \le \epsilon_0$. Assume that $\epsilon_{t - 1} \le (m - 1)^{t - 1} \delta^{t - 1} \epsilon_0$, we may show,
            $$
                \epsilon_t = \max_k \left[\|x_k^t - x_k^*\|_1\right] \le \max_k \delta \left(\sum_{i \ne k} \|x_i^{t - 1} - x_i^*\|_1\right) \le \delta (m - 1) \epsilon_{t - 1} \le (m - 1)^t \delta^t \epsilon_0.
            $$
            It can be seen easily that $\epsilon_t$ goes to $0$ exponentially fast considering that $\delta < \frac{1}{m - 1}$.

            We have shown that the $L_1$-norm normalized strategies converge to the $L_1$-norm normalized NE. It follows naturally that the strategies themselves converge to the unique NE. Morevoer, the convergence is linear, i.e., the error decreases exponentially fast.
        \end{proof}

\subsection{Multiplicity of NE in multiplayer USGs}
        
In fact, there may be infinitely many NEs in a multiplayer USG. Here is an interesting example~\cite{CPZ13}.

\begin{example}
    Consider a 4-player USG where game tensors $(A^1, A^2, A^3, A^4)$ are such that $A_{1112}^1 = A_{2122}^1 = A_{1112}^2 = A_{1222}^2 = A_{1112}^3 = A_{2122}^3 = A_{1121}^4 = A_{2122}^4 = 2$, $A_{i_1 i_2 i_3 i_4}^j = 0$ otherwise. We consider symmetric strategy $x = (x_1, x_2)$. In order for $x$ to be an NE, we need
    $$
        \left\{
        \begin{array}{c}
            2 x_1^2 x_2 = \lambda x_1 \\
            2 x_1 x_2^2 = \lambda x_2 \\
            x_1^2 + x_2^2 = 1
        \end{array}
        \right..
    $$
    By setting $\lambda = 2 x_1 x_2$, it appears that any pair of $(x_1, x_2)$ where $x_1^2 + x_2^2 = 1$ forms a symmetric NE. Moreover, any equilibrium payoff $\lambda \in [0, 1]$ can be achieved by some choice of $(x_1, x_2)$.
\end{example}

\section{Acknowledgements}

We are grateful to Andrew Yao and Christos Papadimitriou for helpful discussions. This work was supported by the National Basic Research Program of China Grant 2011CBA00300, 2011CBA00301, the Natural Science Foundation of China Grant 61033001, 61361136003, 61303077, Tsinghua university Initiative Scientific Research Grant and China Youth 1000-talent program.

%
%
%
%
%
%

\bibliographystyle{plain}
\bibliography{redistribution_refs}

\end{document}